\newcommand{\Oh}{\mathcal{O}}       % Big O notation (uses \mathcal{O})
\newcommand{\OmegaBig}{\Omega}     % Big Omega notation (uses standard \Omega)
\newtheorem{theorem}{Theorem}
\newtheorem{lemma}{Lemma}
\begin{document}
\renewcommand{\qedsymbol}{}
\title{Flexible Coded Distributed Convolution Computing for Enhanced Straggler Resilience and Numerical Stability in Distributed CNNs}
\author{Shuo Tan, Rui Liu, Xuesong Han, Xianlei Long,~\IEEEmembership{Member,~IEEE}, Kai Wan,~\IEEEmembership{Member,~IEEE},\\ Linqi Song,~\IEEEmembership{Member,~IEEE}, and Yong Li,~\IEEEmembership{Member,~IEEE}% <-this % stops a space  % <-this % stops a space
\thanks{This work is supported by the Fundamental Research Funds for the Central Universities
under Grant 2024CDJGF-034.}
\thanks{Shuo Tan, Rui Liu, Xianlei Long, Yong Li are with the College of Computer
Science, Chongqing University, Chongqing 400044, China (e-mail: shuotan@cqu.edu.cn, liurui\_cs@cqu.edu.cn, xianlei.long@cqu.edu.cn, yongli@cqu.edu.cn).}
\thanks{Xuesong Han is with the School of Electrical Engineering, Korea University, Seoul 02841, South Korea (e-mail: xuesonghan@korea.ac.kr).}
\thanks{Kai Wan is with the School of Electronic Information and Communications, Huazhong University of Science and Technology, Wuhan 430074, China (e-mail: kai\_wan@hust.edu.cn).}
\thanks{Linqi Song is with City University of Hong Kong, Hong Kong and City University of Hong Kong Shenzhen Research Institute, Shenzhen, China (email: linqi.song@cityu.edu.hk).}}
\markboth{}%
{Shell \MakeLowercase{\textit{et al.}}: A Sample Article Using IEEEtran.cls for IEEE Journals}
\maketitle
\begin{abstract}
Deploying Convolutional Neural Networks (CNNs) on resource-constrained devices necessitates efficient management of computational resources, often via distributed environments susceptible to latency from straggler nodes. This paper introduces the Flexible Coded Distributed Convolution Computing (FCDCC) framework to enhance straggler resilience and numerical stability in distributed CNNs. We extend Coded Distributed Computing (CDC) with Circulant and Rotation Matrix Embedding (CRME) which was originally proposed for matrix multiplication to high-dimensional tensor convolution. For the proposed scheme, referred to as the Numerically Stable Coded Tensor Convolution (NSCTC) scheme, we also propose two new coded partitioning schemes: Adaptive-Padding Coded Partitioning (APCP) for the input tensor and Kernel-Channel Coded Partitioning (KCCP) for the filter tensor. These strategies enable linear decomposition of tensor convolutions and encoding them into CDC subtasks, combining model parallelism with coded redundancy for robust and efficient execution. Theoretical analysis identifies an optimal trade-off between communication and storage costs. Empirical results validate the framework's effectiveness in computational efficiency, straggler resilience, and scalability across various CNN architectures.
\end{abstract}
\begin{IEEEkeywords}
Coded Distributed Computing, Convolutional Neural Networks, Straggler Resilience, Numerical Stability, Tensor Partitioning
\end{IEEEkeywords}
\section{Introduction}
\IEEEPARstart{I}{n} the rapidly evolving domain of distributed machine learning, Convolutional Neural Networks (CNNs) have become fundamental due to their exceptional capabilities in image feature extraction and classification \cite{shen2022high,li2021contexts}, as well as their versatility enabled by transfer learning \cite{raghu2019transfusion,yosinski2014transferable}. A significant trend in this field, particularly relevant to Internet of Things (IoT) applications, is the shift towards edge computing, where data processing is conducted directly on edge devices \cite{shi2016edge}. This paradigm reduces reliance on cloud resources, minimizing latency \cite{rabinia2024algorithms} and enhancing data privacy \cite{wang2025privacy}, thereby improving service quality.

Deploying CNNs in distributed systems, especially on resource-constrained IoT devices, poses significant challenges due to intensive computational requirements, particularly within convolutional layers (ConvLs). Convolution operations represent over 90\% of the Multiply-Accumulate operations (MACs) in mainstream CNN architectures, including AlexNet, VGGNet, GoogleNet and ResNet \cite{fan2021rethinking}, and account for more than 80\% of the computational time during inference \cite{hadidi2023creating}.

Collaborative inference across multiple devices has emerged as a viable approach to mitigate the computational burden on the single device and enhance CNN deployment efficiency \cite{du2020model}. However, inference latency is often significantly impacted by slow worker nodes, commonly referred to as \emph{stragglers}. These stragglers, arising from hardware heterogeneity and variable network conditions \cite{chen2019deep}, can lead to performance degradation and potential failures, particularly in IoT systems where data loss rates may exceed 70\% per layer \cite{hadidi2023creating}.

\emph{Coded Distributed Computing (CDC)} is a straggler-resilient paradigm that injects \emph{algorithmic redundancy} into parallel workloads executed on unreliable clusters.  
Assume a job can be decomposed into \(m\) independent subtasks.  
The master linearly encodes these subtasks with an \((n,m)\) error-correcting code and dispatches the resulting \(n>m\) coded subtasks to \(n\) workers.  
Because any subset of \(k\) returned results, \(m\le k<n\), suffices for decoding, the master could disregard the slowest \(n-k\) workers; the minimum such \(k\) is called the \emph{recovery threshold}.  
By adding coded redundancy for deterministic resilience, CDC provably shortens the expected completion time under heterogeneous latency and even node failures.
CDC has been successfully applied in Coded Matrix-Matrix Multiplication (CMMM) and other matrix multiplication-based algorithms \cite{lee2017speeding,yu2017polynomial,fahim2017optimal, yu2019lagrange,dutta2019optimal, jia2021cross, das2022coded,li2022flexible, wang2024addressing, makkonen2024general}, due to the ease of linear decomposition of these operations.

The application of CDC in large–scale deep-learning systems has attracted increasing research interest.  
Dutta \textit{et al.} pioneered a unified CDC framework for deep neural networks (DNNs) based on Generalised PolyDot codes \cite{dutta2018unified}, demonstrating that very large DNN models can be executed on unreliable hardware susceptible to soft errors. Subsequent studies \cite{ji2023coded,hadidi2023creating} have integrated CDC at the model-parallel level, focusing predominantly on fully connected layers (FCLs) that perform operations of the form $\mathbf{Y} = \mathbf{W}\mathbf{X}$, where CMMM-based schemes were used to address straggler effects through encoding of the weight matrix $\mathbf{W}$ and input matrix $\mathbf{X}$.

However, existing CMMM-based schemes are not directly extensible to convolutional neural networks (CNNs), given that ConvLs process high-order tensor computations rather than traditional matrix operations. For a tensor \(\mathcal T\in\mathbb R^{d_1\times\cdots\times d_N}\), CDC decomposition introduces multiple, mode-dependent splitting dimensions that demand purpose-built partition strategies. 
Moreover, numerical stability is a critical design axis for any practical CDC deployment \cite{sun2022surprising}.
Most CDC codes—including Polynomial \cite{yu2017polynomial}, MatDot \cite{fahim2017optimal}, PolyDot \cite{dutta2019optimal}, and Lagrange \cite{yu2019lagrange}—rely on real-valued evaluations of Vandermonde polynomials.The condition number of an \(n\times n\) real Vandermonde matrix increases exponentially with \(n\) \cite{gautschi1987lower}; empirical evidence shows that decoding often fails for \(n\ge 30\) workers because the matrix inversion becomes ill-conditioned \cite{subramaniam2019random,fahim2021numerically} which limits the scalability of the system.

To mitigate this limitation, Ramamoorthy~\textit{et~al.} strengthened the numerical stability of CMMM by integrating \emph{circulant–rotation matrix embeddings} (CRME) into polynomial codes~\cite{ramamoorthy2021numerically}. CRME—long utilized in numerical simulation~\cite{olson2014circulant} and signal processing~\cite{gray2006toeplitz}—leverages the well-conditioned nature of complex Vandermonde matrices whose evaluation points lie on the unit circle, while all arithmetic remains in the real field \(\mathbb{R}\).  Consequently, the recovery matrix stays well conditioned even when the system scales beyond \(n\ge 100\) workers, demonstrating CRME's applicability to real-valued CDC scenarios, including tensor convolution operations in CNNs.

Integrating CDC into CNNs poses significant challenges due to the complex and tightly coupled nature of ConvLs. These layers involve intricate interactions between three-dimensional input tensors (feature maps) and four-dimensional filter tensors (kernels), requiring careful management during tensor decomposition to preserve spatial continuity at the model parallelism level \cite{coates2013deep}. 
Existing research on CDC within this field was limited to one-dimensional vector convolutions \cite{dutta2017coded,zhou2022dynamic}, which were insufficient for CNN architectures.

Recent efforts, such as the Resilient, Secure, and Private Coded Convolution (RSPCC) scheme by Qiu \textit{et al.} \cite{qiu2024resilient}, have sought to broaden CDC's applicability to ConvLs by adapting the im2col-based algorithm \cite{li2016performance}, transforming tensor convolutions into matrix-vector multiplications. However, due to this specific pre-encoding transformation, RSPCC exhibits limited compatibility with alternative convolution algorithms, including FFT-based methods \cite{vasilache2015fast}, Winograd-based algorithms \cite{lavin2016fast}, and approximation methods \cite{denton2014exploiting}. Additionally, RSPCC requires each node to retain a complete filter, thereby increasing storage requirement. It also relies on finite field computations, which introduce numerical stability challenges in real-valued environments \cite{ramamoorthy2021numerically}.

Addressing these challenges is crucial for the effective deployment of CDC in large-scale distributed CNNs. New general tensor block encoding and decoding strategies are required to manage high-dimensional structures, ensuring numerical stability while optimizing the trade-off between communication and storage costs.

This paper introduces a Flexible Coded Distributed Convolution Computing (FCDCC) framework designed specifically for ConvLs in CNNs within distributed environments. The framework enhances model parallelism through a numerically stable coded computing scheme, improving both efficiency and robustness while minimizing computational and memory requirements. Moreover, our approach is universally applicable: only the encoder and decoder act at the tensor level, so each worker may execute \emph{any} black-box tensor convolution algorithm.

The primary contributions of this work are as follows:
\begin{itemize}
    \item \textbf{Numerically Stable Coded Tensor Convolution (NSCTC)}: 
    We extend the CRME‑based CDC scheme from matrix multiplication to high‑dimensional tensor convolution by introducing the first tensor–matrix multiplication operation for both encoding and decoding in CDC based on our knowledge. This approach leverages complex Vandermonde matrices computed over the real field to achieve numerical stability, achieving a maximum mean squared error (MSE) of \(10^{-27}\) for AlexNet's ConvLs in a distributed setting with 20 worker nodes. This represents the first CDC scheme specifically designed for high-dimensional tensor operations.
    \item \textbf{Adaptive-Padding Coded Partitioning (APCP)}:  
    We introduce an adaptive partitioning strategy that divides the input tensor along its spatial dimensions (height \(H\) or width \(W\)), based on kernel size (\(K_H, K_W\)) and stride \(s\), with the addition of coded redundancy. This reduces communication cost and workload per node compared to the single-node scheme while maintaining resilience against stragglers.
    \item \textbf{Kernel-Channel Coded Partitioning (KCCP)}:  
    We employ a partitioning approach for the filter tensor along the output channel dimension \(N\) and generate coded partitions to enhance resilience. This approach effectively reduces both the storage cost and the per-node workload, while simultaneously enhancing straggler resilience relative to the RSPCC scheme---which only incorporates coded redundancy in the inputs.
    \item \textbf{Framework Optimization}: The FCDCC framework is then analyzed to flexibly choose optimal partitioning parameters \(k_A\) and \(k_B\), balancing communication and storage costs while maintaining a fixed number of subtasks (where \(k_A k_B\) is constant).
    \item \textbf{Generality}: The framework is applicable to CNN libraries such as PyTorch and various CNN models, including LeNet, AlexNet, and VGGNet.
\end{itemize}

The remainder of the paper is organized as follows: Section II presents the system model, Section III introduces the NSCTC scheme, Section IV describes the FCDCC framework and cost optimization, Section V provides theoretical analysis and complexity evaluation, Section VI offers experimental validation, and Section VII concludes with future research directions.

\emph{Notations:} The imaginary unit is denoted by \(\mathrm{i} = \sqrt{-1}\). The set of integers modulo \( m \) is represented by \(\mathcal{Z}_m = \{0, 1, \ldots, m-1\}\), and the cardinality of a finite set \(\mathcal{I}\) is denoted by \(|\mathcal{I}|\). We employ various matrix and tensor operations, including the Kronecker product (\(\otimes\)) and tensor convolution (\(*\)). For a matrix \(\mathbf{M}\), \(\mathbf{M}(i, j)\) denotes its \((i, j)\)-th entry; \(\mathbf{M}_{i,j}\) represents its \((i, j)\)-th block sub-matrix; and \(\mathbf{M}_{i}\) denotes its \(i\)-th column block sub-matrix. For a tensor \(\mathbf{T} \in \mathbb{R}^{N_1 \times N_2 \times N_3}\), \((\mathbf{T})_{i,j,k}\) (or \(t_{i,j,k}\)) refers to its \((i, j, k)\)-th entry; similar notation applies to higher-dimensional tensors. We denote a $1 \times U_k$ tensor block list as \(\mathbf{T}' = [\mathbf{T}'_0,\, \mathbf{T}'_1,\, \ldots,\, \mathbf{T}'_{U_k - 1}]\) or \(\{\mathbf{T}'_i\}_{i=0}^{U_k - 1}\), where each tensor block \(\mathbf{T}'_i\) has identical dimensions.

\section{System Model}
\subsection{System Overview}
The proposed FCDCC framework adopts a master--worker architecture with one master and $n$ homogeneous workers, and it tolerates up to $\gamma = n - \delta$ stragglers (i.e., workers that are significantly slower or unresponsive).
The master applies the designated tensor-partitioning scheme to the input tensor $\mathbf{X}$ and the filter tensor $\mathbf{K}$, thereby splitting the original tensor-convolution task into $\delta$ independent node-level subtasks.
These $\delta$ raw subtasks are then encoded to inject redundancy and expanded into $n$ coded node-level subtasks, which are dispatched one-to-one to distinct workers.
Upon receiving results from any $\delta$ workers, the master immediately decodes to recover the output tensor~$\mathbf{Y}$.
\begin{figure*}[!t]
\centering
\includegraphics[width=\textwidth]{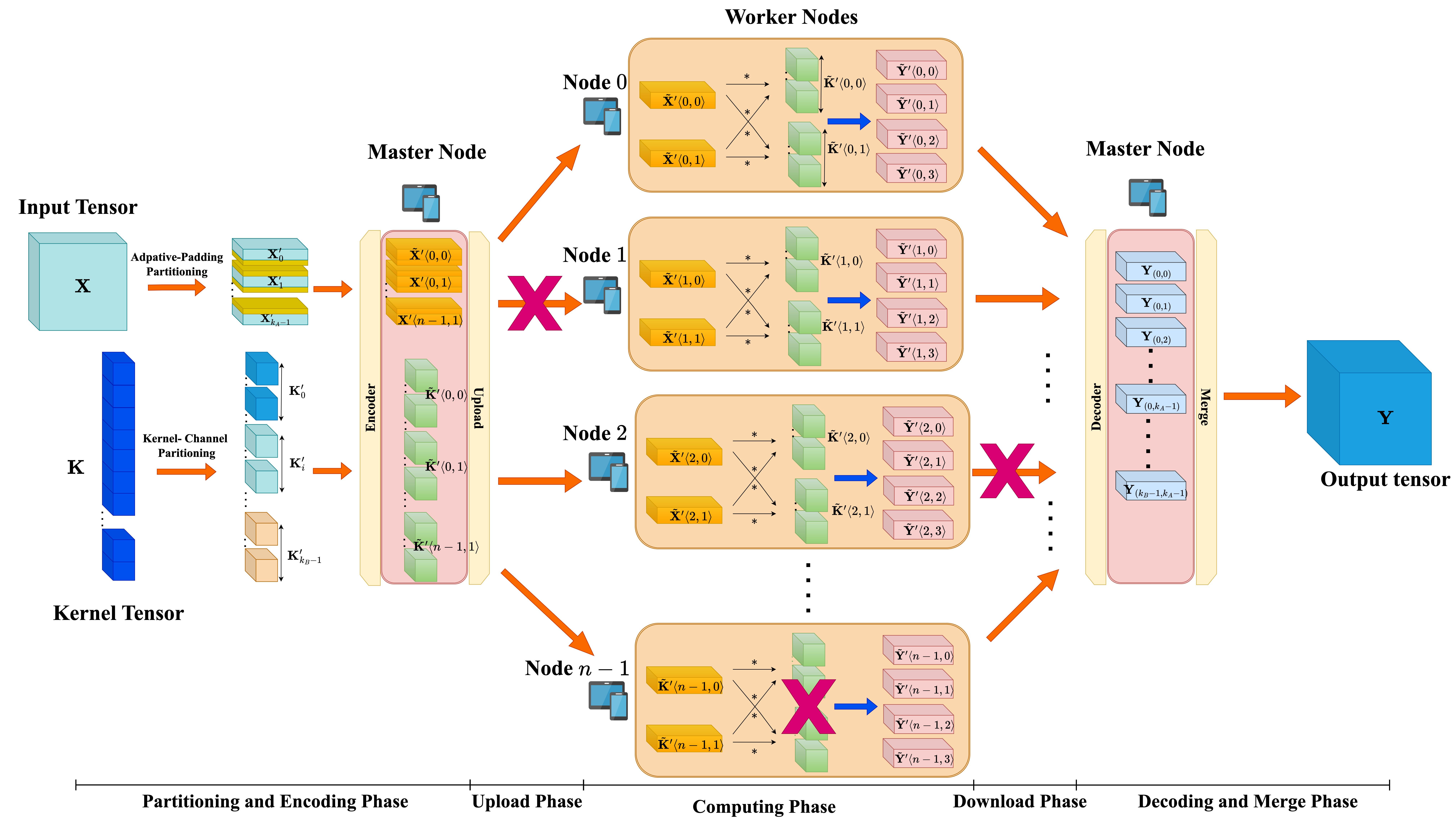}
\caption{The FCDCC framework demonstrating the main workflow and tensor definitions in a comprehensive 3-D representation. We assume 3 kinds of straggler problems in this diagram: Upload failures, computing failures, and download failures.}
\label{fig:framework}
\end{figure*}

Mostly used notation is summarized in Table \ref{tab:notation}. 
\begin{table}[!t]
  \centering
  \caption{\textsc{Notation Table}}
  \label{tab:notation}
  \renewcommand{\arraystretch}{1.2}
  \begin{tabular}{@{}l l@{}}
    \toprule
    \textbf{Notation} & \textbf{Description} \\
    \midrule
    $C$       & Number of input channels \\
    $N$       & Number of output channels \\
    $H, W$    & Height and width of the input tensor \\
    $K_H, K_W$ & Kernel (filter) height and width \\
    $p$       & Padding size \\
    $s$       & Stride length \\
    $H',W'$   & Height and width of the output tensor \\
    $\mathbf{X}$ & Input tensor $\in \mathbb{R}^{C\times(H+2p)\times(W+2p)}$ \\
    $\mathbf{K}$ & Filter tensor $\in \mathbb{R}^{N\times C\times K_H\times K_W}$ \\
    $\mathbf{Y}$ & Output tensor $\in \mathbb{R}^{N\times H' \times W'}$ \\
    $n$       & Number of worker nodes \\
    $\gamma$  & Straggler resilience capacity \\
    $\delta$  & Recovery threshold \\
    $k_A,k_B$ & Partition number for $\mathbf{X}$ and $\mathbf{K}$ \\
    $l$       & \begin{tabular}[t]{@{}l@{}}Number of coded partitions of each tensor ($\mathbf{X}$, $\mathbf{K}$) \\ assigned to a worker node.\end{tabular} \\
    \bottomrule
  \end{tabular}
\end{table}

Key parameters in this framework include:
\begin{itemize}
    \item \textbf{Recovery Threshold (\( \delta \))}: Define \(\delta = \left\lfloor \frac{k_A k_B}{\ell^2} \right\rfloor\), where \( k_A \) and \( k_B \) are partitioning parameters along the spatial (height or width) and output channel dimensions, respectively, and \( \ell \) is the subgroup parameter (\( \ell = 2 \) in CRME-based schemes; \( \ell = 1 \) in classical CDC schemes \cite{yu2017polynomial, fahim2017optimal, dutta2019optimal, yu2019lagrange}). The recovery threshold $\delta$ represents that by receiving the answers of any $\delta$ worker nodes, the server should recover the output tensor.

    \item \textbf{Storage Fraction Parameter (\( \Delta \))}: The ratio of tensor storage per worker node to the total tensor size, representing the (normalized) storage cost by each worker node.
\end{itemize}

The other key components and operations of the framework are described below.
\subsection{Tensor Operations}

The convolution operation \( * \) between the input tensor \(\mathbf{X}\) and the filter tensor \(\mathbf{K}\) results in the output tensor \(\mathbf{Y}\), where the spatial dimensions of \(\mathbf{Y}\), denoted as \(H'\) and \(W'\), are computed as \(H' = \left\lfloor \frac{H + 2p - K_H}{s} + 1 \right\rfloor\) and \(W' = \left\lfloor \frac{W + 2p - K_W}{s} + 1 \right\rfloor\), given stride \(s\) and padding \(p\) \cite{krizhevsky2012imagenet}. Our compute task is defined as:
\begin{equation}
\begin{aligned}
& \mathbf{Y}_{n, h, w}= (\mathbf{X} * \mathbf{K})_{n, h, w} \\
& = \sum_{c=0}^C \sum_{i=0}^{K_H-1} \sum_{j=0}^{K_W-1} (\mathbf{X})_{c, sh+i, s w+j}(\mathbf{K})_{n, c, i, j},
\end{aligned}
\end{equation}
for \(n \in \mathcal{Z}_N\), \(h \in \mathcal{Z}_{H'}\), \(w \in \mathcal{Z}_{W'}\), where stride \(s\) is applied. Here, \((\mathbf{X} * \mathbf{K})_{n,h,w}\) represents the value of the output tensor at position \((n, h, w)\), calculated by adding the element-wise products of the input tensor \(\mathbf{X}\) and the filter tensor \(\mathbf{K}\) across all input channels \(C\) and the dimensions of the kernel \((K_H, K_W)\).

\subsection{Distributed Computing Framework Operations}
The FCDCC framework operates through several key phases:
\subsubsection{Partitioning and Encoding Phase}
The master node executes the following tasks:

\begin{itemize}
    \item \textbf{Partitioning}: The input tensor \(\mathbf{X}\) is divided into \(k_A\) segments \(\{\mathbf{X}'_i\}_{i=0}^{k_A - 1}\) along the height dimension \(H\) (the same approach can be used for the width dimension \(W\)), while preserving the input channel dimension \(C\). Each partition is represented as \(\mathbf{X}'_i \in \mathbb{R}^{C \times \hat{H} \times (W + 2p)}\). This partitioning is performed using the Adaptive-Padding Partitioning scheme, which calculates the adaptive padded height \(\hat{H}\) and determines the adjusted starting index \(\hat{S}\) for each segment. 

    Simultaneously, the filter tensor \(\mathbf{K}\) is divided into \(k_B\) non-overlapping subtensors \(\{\mathbf{K}'_j\}_{j=0}^{k_B - 1}\) using the Kernel-Channel Partitioning scheme, along the output channel dimension \(N\), while maintaining the original kernel dimensions \((K_H, K_W)\) and input channel dimension \(C\). Each partition is represented as \(\mathbf{K}'_j \in \mathbb{R}^{\frac{N}{k_B} \times C \times K_H \times K_W}\).
        
    \item \textbf{Encoding}: The partitioned input tensors \(\{\mathbf{X}'_i\}_{i=0}^{k_A - 1}\) and filter tensors \(\{\mathbf{K}'_j\}_{j=0}^{k_B - 1}\) are encoded using Circulant and Rotation Matrix Embedding (CRME) matrices \(\mathbf{A} \in \mathbb{R}^{k_A \times n\ell}\) and \(\mathbf{B} \in \mathbb{R}^{k_B \times n\ell}\) through linear combinations, respectively, as follows:
    \small{
        \begin{align}  
            \tilde{\mathbf{X}}'_{\langle i,j \rangle} &= \sum_{\alpha=0}^{k_A - 1} \mathbf{A}(\alpha, i\ell + j) \mathbf{X}'_\alpha, \quad \text{for } i \in \mathcal{Z}_n, \; j \in \mathcal{Z}_\ell, \label{eq:encoding_X} \\
            \tilde{\mathbf{K}}'_{\langle i,j \rangle} &= \sum_{\beta=0}^{k_B - 1} \mathbf{B}(\beta, i\ell + j) \mathbf{K}'_\beta, \quad \text{for } i \in \mathcal{Z}_n, \; j \in \mathcal{Z}_\ell. \label{eq:encoding_K}
        \end{align}
    } 
    This encoding process generates \(n\ell\) encoded partitions \(\tilde{\mathbf{X}}'_{\langle i,j \rangle}\) and \(\tilde{\mathbf{K}}'_{\langle i,j \rangle}\) for each \(i \in \mathcal{Z}_n\) and \(j \in \mathcal{Z}_\ell\), where \(\langle i, j \rangle\) denotes the pair of indices corresponding to worker node \(i\) and tensor partition \(j\) assigned to that node. In the FCDCC framework, the filter tensor partitions \(\{\mathbf{K}'_j\}\) are typically encoded and distributed once during initialization, as CNN inference tasks generally utilize fixed models.
\end{itemize}

\subsubsection{Upload, Computing, and Download Phases}
\begin{itemize}
    \item \textbf{Upload}: Given that the encoded \(\ell\) filter tensors \(\{\tilde{\mathbf{K}}_{\langle i,j \rangle}\}_{j=0}^{\ell - 1}\) are typically pre-stored on each worker node \( i \), the master node only needs to transmit \( \ell \) distinct coded input tensor partitions \(\{\tilde{\mathbf{X}}_{\langle i,j \rangle}\}_{j=0}^{\ell - 1}\) to each of the \( n \) worker nodes.
    \item \textbf{Computing}: Upon receiving the encoded input tensors, each worker node \( i \) independently performs convolution operations between the \( \ell \) encoded input partitions and the \( \ell \) encoded filter partitions, resulting in \( \ell^2 \) encoded outputs \(\tilde{\mathbf{Y}}_{\langle i, j \rangle} \in \mathbb{R}^{\frac{N}{k_B} \times \frac{H'}{k_A} \times W'}\) for \( j \in \mathcal{Z}_{\ell^2} \). The worker node then concatenates these outputs along the channel dimension to obtain \(\tilde{\mathbf{Y}}_i \in \mathbb{R}^{\frac{\ell^2N}{k_B} \times \frac{H'}{k_A} \times W'}\).
    \item \textbf{Download}: Each worker node transmits its computed result \( \tilde{\mathbf{Y}}_i \) back to the master node for final reconstruction.
\end{itemize}
\subsubsection{Decoding and Merging Phase}
\begin{itemize}
    \item \textbf{Decoding}: Once results from at least \( \delta \) worker nodes are received, the master node employs a decoding matrix \( \mathbf{D} \) to reconstruct the original convolution results \(\mathbf{Y}_{(i,j)}\), for \( i \in \mathcal{Z}_{k_A} \) and \( j \in \mathcal{Z}_{k_B} \). The decoding matrix is configured based on the indices of the active worker nodes \( \mathcal{I} \) $(|\mathcal{I}| = \delta)$.
    \item \textbf{Merging}: The master node assembles the decoded $k_Ak_B$ tensor partitions to form the final output tensor \(\mathbf{Y} \in \mathbb{R}^{N \times H' \times W'}\). This involves concatenating the partitions along the spatial dimension (\( H' \) or \( W' \)) and the output channel dimension \( N \).
\end{itemize}

The entire framework is illustrated in Fig.~\ref{fig:framework}.
\subsection{Cost Definitions}\label{sec:cost_optimization}
In the FCDCC framework, the objective is to minimize the total cost per worker node, \(U_{k_A, k_B}\), under fixed parameters \((n, \delta, \gamma)\) and a constant partition product \(Q = k_A k_B\). The total cost comprises communication cost, computational cost, and storage cost, each evaluated based on their respective unit costs: \(\lambda_{\text{comm}}\) per tensor entry for communication, \(\lambda_{\text{comp}}\) per MAC operation for computation, and \(\lambda_{\text{store}}\) per tensor entry for storage. Specifically, the cost components are defined as follows:

\begin{itemize}
    \item \textbf{Upload Communication Cost (\(C_{\text{comm\_up}}\))}:  
    The upload communication cost is proportional to the number of input tensor entries transmitted to each node, denoted as \(V_{\text{comm\_up}}\):
    \begin{equation}
        C_{\text{comm\_up}} = \lambda_{\text{comm}} V_{\text{comm\_up}}.
    \end{equation}
    
    \item \textbf{Download Communication Cost (\(C_{\text{comm\_down}}\))}:  
    The download communication cost is proportional to the number of output tensor entries received from each node, denoted as \(V_{\text{comm\_down}}\):
    \begin{equation}
        C_{\text{comm\_down}} = \lambda_{\text{comm}} V_{\text{comm\_down}}.
    \end{equation}
    
    \item \textbf{Total Communication Cost (\(C_{\text{comm}}\))}:  
    The total communication cost is the sum of upload and download communication costs:
    \begin{equation}
    \begin{aligned}
    C_{\text{comm}} &= C_{\text{comm\_up}} + C_{\text{comm\_down}}\\
    &= \lambda_{\text{comm}} \left( V_{\text{comm\_up}} + V_{\text{comm\_down}} \right).
    \end{aligned}
    \end{equation}
    
    \item \textbf{Storage Cost (\(C_{\text{store}}\))}:  
    The storage cost is proportional to the number of filter tensor entries stored on each node, denoted as \(V_{\text{store}}\):
    \begin{equation}
        C_{\text{store}} = \lambda_{\text{store}} V_{\text{store}}.
    \end{equation}
    
    \item \textbf{Computational Cost (\(C_{\text{comp}}\))}:  
    The computational cost is proportional to the number of Multiply-Accumulate (MAC) operations required for tensor convolution per node, denoted as \(M_{\text{comp}}\):
    \begin{equation}
        C_{\text{comp}} = \lambda_{\text{comp}} M_{\text{comp}}.
    \end{equation}
\end{itemize}

Therefore, the total cost per worker node is given by:
\begin{equation}
    U_{k_A, k_B} = C_{\text{comm}} + C_{\text{comp}} + C_{\text{store}}.
\end{equation}

To determine the optimal partitioning strategy, we formulate the following optimization problem:
\begin{equation}
\begin{aligned}
(k_A^*, k_B^*) = \underset{k_A, k_B}{\arg\min} \quad & U_{k_A, k_B} \\
\text{subject to} \quad & k_A, k_B \in \mathcal{S}, \\
& k_A k_B = Q,
\end{aligned}
\label{eq:optimization_problem}
\end{equation}
where \( \mathcal{S} = \left\{ x \in \mathbb{Z}^+ \,\big|\, x \equiv 0 \ (\mathrm{mod}\ \ell) \text{ or } x = 1 \right\} \) is the set of permissible partitioning factors. The optimal values \( (k_A^*, k_B^*) \) will be determined in Section IV.

\section{Numerically Stable Coded Tensor Convolution}

To address the limitations of traditional CDC schemes in high-dimensional convolution operations, this section introduces two complementary schemes: (I) partitioning tensor lists into subgroups for parallel convolutions, and (II) utilizing tensor-matrix multiplication for encoding and decoding operations. The proposed NSCTC methodology first employs Scheme I to perform linear decomposition of the tensor computing task, and subsequently applies Scheme II to introduce redundancy and create coded subtasks. Upon reception of a predetermined number of completed subtasks, the entire computational task can be recovered with guaranteed reliability. These techniques collectively establish NSCTC as the foundational component of the FCDCC framework.

To optimize the computational efficiency of CDC schemes, we decompose the input and filter tensor lists into disjoint subgroups parameterized by the partition factor \(\ell \in \mathbb{N}^+\). We assume that both \(k_A\) and \(k_B\) are exact multiples of \(\ell\) (i.e., \(k_A \equiv 0 \pmod{\ell}\) and \(k_B \equiv 0 \pmod{\ell}\)), where \(k_A, k_B \in \mathbb{N}^+\) denote the respective sizes of the input and filter tensor lists. This ensures uniform subgroup partitioning with precisely \(\frac{k_A}{\ell}\) and \(\frac{k_B}{\ell}\) tensors per subgroup, respectively. Let \(\mathbf{T}_A \in \mathbb{R}^{k_A \times (C \times H \times W)}\) be a list of \(k_A\) 3D input tensors partitioned into \(\frac{k_A}{\ell}\) subgroups and \(\mathbf{T}_B \in \mathbb{R}^{ k_B \times (N \times C \times K_H \times K_W)}\) be a list of \(k_B\) 4D filter tensors partitioned into \(\frac{k_B}{\ell}\) subgroups, defined as follows:
\begin{align}
\mathbf{T}_A &= \left[\mathbf{T}_{A\langle 0,0\rangle}, \ldots, \mathbf{T}_{A\left\langle \frac{k_A}{\ell}-1,\ell-1\right\rangle}\right], \label{eq:T_A_partition} \\
\mathbf{T}_B &= \left[\mathbf{T}_{B\langle 0,0\rangle}, \ldots, \mathbf{T}_{B\left\langle \frac{k_B}{\ell}-1,\ell-1\right\rangle}\right], \label{eq:T_B_partition}
\end{align}
where \(\mathbf{T}_{A\langle i,j\rangle} \in \mathbb{R}^{C \times H \times W}\) and \(\mathbf{T}_{B\langle i,j\rangle} \in \mathbb{R}^{N \times C \times K_H \times K_W}\).
The output tensor list $\mathbf{T}_C$ is defined, and the convolution operation $*$ for tensor lists $\mathbf{T}_A$ and $\mathbf{T}_B$ is given by:
\begin{equation}
\begin{aligned}
\mathbf{T}_C &= \mathbf{T}_A * \mathbf{T}_B \\
&= \big[\mathbf{T}_{A\langle 0,0\rangle}, \ldots, \mathbf{T}_{A\left\langle \lfloor k_A-1/\ell \rfloor, \ell-1\right\rangle}\big] * \\
& \quad \big[\mathbf{T}_{B\langle 0,0\rangle}, \ldots, \mathbf{T}_{B\left\langle \lfloor k_B-1/\ell \rfloor, \ell-1\right\rangle}\big] \\
&= \left[\mathbf{T}_{C\langle 0,0\rangle}, \mathbf{T}_{C\langle 0,1\rangle}, \ldots, \mathbf{T}_{C\langle k_A-1, k_B-1\rangle}\right],
\end{aligned}
\end{equation}
where
\begin{equation}
\mathbf{T}_{C\langle i,j\rangle} = \mathbf{T}_{A\langle \lfloor i/\ell \rfloor, i \bmod \ell\rangle} * \mathbf{T}_{B\langle \lfloor j/\ell \rfloor, j \bmod \ell\rangle}.
\end{equation}

To integrate CRME, we define $\ell=2$ and use a $2 \times 2$ rotation matrix $\mathbf{R}_\theta$ :
\begin{equation}\label{eq:rotationmatrix}
    \mathbf{R}_\theta = 
    \begin{bmatrix}
        \cos \theta & -\sin \theta \\
        \sin \theta & \cos \theta
    \end{bmatrix}.
\end{equation}
We define the rotation angle \(\theta = 2\pi / q\), where \(q \geq n\) is an odd integer, with \(n\) representing the number of worker nodes, and impose the constraint \(\delta = \frac{k_A k_B}{4} \leq n\). This configuration ensures that each worker node stores a fraction \(\Delta_A = 2/k_A\) of \(\mathbf{T}_A\) and a fraction \(\Delta_B = 2/k_B\) of \(\mathbf{T}_B\), respectively. The \((i, j)\)-th block of the encoding matrices is defined by specific powers of the rotation matrix \(\mathbf{R}_\theta\) as follows:
\begin{equation}  
\begin{aligned}  
\mathbf{G}_{i, j}^A &= \mathbf{R}_\theta^{j i}, \quad \text{for } i \in \mathcal{Z}_{\frac{k_A}{2}}, j \in \mathcal{Z}_n, \\
\mathbf{G}_{i, j}^B &= \mathbf{R}_\theta^{(j \frac{k_A}{2}) i}, \quad \text{for } i \in \mathcal{Z}_{\frac{k_B}{2}}, j \in \mathcal{Z}_n.
\end{aligned}  
\end{equation}

Thus, the encoding matrices \(\mathbf{G}^A\) and \(\mathbf{G}^B\) are defined as:
\begin{equation}  
\begin{aligned}  
\label{eq:GAB}
\mathbf{G}^A &= [\mathbf{G}^A_0 \,\, \mathbf{G}^A_1 \,\, \ldots \,\, \mathbf{G}^A_{2n-1}] \\
&= 
\left[\begin{array}{cccc}  
\mathbf{I} & \mathbf{I} & \cdots & \mathbf{I} \\
\mathbf{I} & \mathbf{R}_\theta & \cdots & \mathbf{R}_\theta^{n-1} \\
\vdots & \vdots & \ddots & \vdots \\
\mathbf{I} & \mathbf{R}_\theta^{\frac{k_A}{2}-1} & \cdots & \mathbf{R}_\theta^{(n-1)\left(\frac{k_A}{2}-1\right)}
\end{array}\right], \\[10pt]
\mathbf{G}^B &= [\mathbf{G}^B_0 \,\, \mathbf{G}^B_1 \,\, \ldots \,\, \mathbf{G}^B_{2n-1}] \\
&=
\left[\begin{array}{cccc}  
\mathbf{I} & \mathbf{I} & \cdots & \mathbf{I} \\
\mathbf{I} & \mathbf{R}_\theta^{\frac{k_A}{2}} & \cdots & \mathbf{R}_\theta^{(n-1)\frac{k_A}{2}} \\
\vdots & \vdots & \ddots & \vdots \\
\mathbf{I} & \mathbf{R}_\theta^{(\frac{k_B}{2}-1)\frac{k_A}{2}} & \cdots & \mathbf{R}_\theta^{(n-1)(\frac{k_B}{2}-1)\frac{k_A}{2}}
\end{array}\right].
\end{aligned}  
\end{equation}
We define the multiplication of a $1 \times U_k$ tensor block list $\mathbf{T}$ and a $U_k \times U_n$ matrix $\mathbf{M}$ as follows, represented by the $\cdot$ operator:
\begin{equation}  
\begin{aligned}  
&\mathbf{T} \cdot \mathbf{M} =   
\begin{bmatrix}  
\mathbf{T}_0,\ldots, \mathbf{T}_{U_k-1}  
\end{bmatrix} \cdot   
\begin{bmatrix}  
m_{00} & \cdots & m_{0,U_n-1} \\   
\vdots & \ddots & \vdots \\
m_{U_k-1,0} & \cdots & m_{U_k-1,U_n-1}  
\end{bmatrix} \\[1em]  
&= \left[  
\sum_{i=0}^{U_k-1} m_{i,0} \mathbf{T}_i, \ldots, \sum_{i=0}^{U_k-1} m_{i,U_n-1} \mathbf{T}_i  
\right] = \tilde{\mathbf{T}}.  
\end{aligned}  
\end{equation}
Note that $\tilde{\mathbf{T}}$ is a $1 \times U_n$ tensor block list.

For the \(i\)-th worker node, considering \(\ell = 2 \), the encoded matrices are
\begin{equation}  
\begin{aligned}  
\tilde{\mathbf{T}}_{A\langle i, j\rangle} &= \mathbf{T}_{A} \cdot \mathbf{G}^A_{2i+j} \\
&= \mathbf{T}_{A\langle 0,j\rangle} + \sum_{k=1}^{\frac{k_A}{2}-1} \sum_{l=0}^1 \mathbf{R}_\theta^{ik}(l,j) \mathbf{T}_{A\langle k,l\rangle},\\
\tilde{\mathbf{T}}_{B\langle i, j\rangle} &= \mathbf{T}_{B} \cdot \mathbf{G}^B_{2i+j} \\
&= \mathbf{T}_{B\langle 0,j\rangle} + \sum_{k=1}^{\frac{k_B}{2}-1} \sum_{l=0}^1 \mathbf{R}_\theta^{i(\frac{k_A}{2})k}(l,j) \mathbf{T}_{B\langle k,l\rangle},
\end{aligned}  
\end{equation}
for \(j \in \mathcal{Z}_2 \) and \(i \in \mathcal{Z}_n\).
\begin{algorithm}[H]  
\caption{Numerically Stable Coded Tensor Convolution (NSCTC)}  
\label{alg:coded_tensor_conv}  
\begin{algorithmic}[1]  
\renewcommand{\algorithmicrequire}{\textbf{Input:}}  
\renewcommand{\algorithmicensure}{\textbf{Output:}}  
\Require Tensor lists $\mathbf{T}_A$, $\mathbf{T}_B$; number of worker nodes $n$;   
        $\ell = 2$; $q = \text{Nextodd}(n)$; $\theta = 2\pi / q$  
\Ensure Output tensor list $\mathbf{T}_C$  

\Statex \textbf{Initialization:}  
\State Construct rotation matrix $\mathbf{R}_\theta$ using \eqref{eq:rotationmatrix} 
\State Construct encoding matrices $\mathbf{G}^A$ and $\mathbf{G}^B$ using \eqref{eq:GAB}
\Statex \textbf{Tensor Encoding:}  
\For{$i \gets 0$ \textbf{to} $n-1$}  
    \For{$j \gets 0$ \textbf{to} $1$}  
        \State $\begin{aligned}  
            \tilde{\mathbf{T}}_{A\langle i, j\rangle} &\gets \mathbf{T}_{A\langle 0,j\rangle} + \sum_{k,l} \mathbf{R}_\theta^{ik}(l,j) \mathbf{T}_{A\langle k,l\rangle} \\
            \tilde{\mathbf{T}}_{B\langle i, j\rangle} &\gets \mathbf{T}_{B\langle 0,j\rangle} + \sum_{k,l} \mathbf{R}_\theta^{i(\frac{k_A}{2})k}(l,j) \mathbf{T}_{B\langle k,l\rangle}  
        \end{aligned}$
    \EndFor  
\EndFor  

\Statex \textbf{Parallel Convolution Computation:}  
\ForAll{$i \gets 0$ \textbf{to} $n - 1$} \textbf{in parallel on worker node $i$}
    \For{$\ell_1 \gets 0$ \textbf{to} $1$}  
        \For{$\ell_2 \gets 0$ \textbf{to} $1$}  
            \State Compute $\tilde{\mathbf{T}}_{A\langle i,\ell_1\rangle} * \tilde{\mathbf{T}}_{B\langle i,\ell_2\rangle}$  
        \EndFor  
    \EndFor  
\EndFor  

\Statex \textbf{Result Collection and Decoding:}  
\State Master node constructs tensor list $\tilde{\mathbf{T}}_C$ from $\delta$ worker nodes  
\State Construct matrix $\mathbf{G}^E$ using the results from $\delta$ worker nodes  
\State $\mathbf{T}_C \gets \tilde{\mathbf{T}}_C \cdot (\mathbf{G}^E)^{-1}$  
\State \Return $\mathbf{T}_C$  
\end{algorithmic}  
\end{algorithm}
The $i$-th worker node computes all pairwise convolutions between $\tilde{\mathbf{T}}_{A\langle i, j\rangle}$ and $\tilde{\mathbf{T}}_{B\langle i, j\rangle}$ for $j \in \mathcal{Z}_2$. These computations can be represented using the Kronecker product (\(\otimes\)). Let $\mathbf{T}_C$ denote a $1 \times k_A k_B$ tensor block list containing all pairwise tensor convolutions. The computation of \(\tilde{\mathbf{T}}_{A\langle i, \ell_1 \rangle} * \tilde{\mathbf{T}}_{B\langle i, \ell_2 \rangle}\) is given by:
\begin{equation}
    \begin{aligned}
        &\tilde{\mathbf{T}}_{A\langle i, \ell_1 \rangle} * \tilde{\mathbf{T}}_{B\langle i, \ell_2 \rangle} = (\mathbf{T}_{A} \cdot \mathbf{G}^A_{2i+\ell_1}) * (\mathbf{T}_{B} \cdot \mathbf{G}^B_{2i+\ell_2}) \\
        &= (\mathbf{T}_{A} * \mathbf{T}_{B}) \cdot (\mathbf{G}^A_{2i+\ell_1} \otimes \mathbf{G}^B_{2i+\ell_2}) \\
        &= \mathbf{T}_C \cdot \scalebox{0.93}{$\left(
        \begin{bmatrix}  
            \mathbf{I}(0, \ell_1) \\
            \mathbf{I}(1, \ell_1) \\
            \mathbf{R}_\theta^i(0, \ell_1) \\
            \mathbf{R}_\theta^i(1, \ell_1) \\
            \vdots \\
            \mathbf{R}_\theta^{i(\frac{k_A}{2}-1)}(0, \ell_1) \\
            \mathbf{R}_\theta^{i(\frac{k_A}{2}-1)}(1, \ell_1)  
        \end{bmatrix} 
        \otimes 
        \begin{bmatrix}  
            \mathbf{I}(0, \ell_2) \\
            \mathbf{I}(1, \ell_2) \\
            \mathbf{R}_\theta^i(0, \ell_2) \\
            \mathbf{R}_\theta^i(1, \ell_2) \\
            \vdots \\
            \mathbf{R}_\theta^{i(\frac{k_B}{2}-1)\frac{k_A}{2}}(0, \ell_2) \\
            \mathbf{R}_\theta^{i(\frac{k_B}{2}-1)\frac{k_A}{2}}(1, \ell_2)
        \end{bmatrix}
        \right)$}
    \end{aligned}.
\end{equation}
Then the output tensor block list $\tilde{\mathbf{T}}_{C_i}$ in the $i$-th worker node can be denoted as:
\begin{equation}  
\begin{aligned}
&\tilde{\mathbf{T}}_{C_i} = \left[\tilde{\mathbf{T}}_{A\langle i, 0 \rangle}, \tilde{\mathbf{T}}_{A\langle i, 1 \rangle}\right] * \left[\tilde{\mathbf{T}}_{B\langle i, 0 \rangle}, \tilde{\mathbf{T}}_{B\langle i, 0 \rangle}\right]\\
&= \mathbf{T}_{C} \cdot \scalebox{0.9}{$\left[ 
\mathbf{G}^A_{2i} \otimes \mathbf{G}^B_{2i}|\mathbf{G}^A_{2i} \otimes \mathbf{G}^B_{2i+1} | 
\mathbf{G}^A_{2i+1} \otimes \mathbf{G}^B_{2i} |\mathbf{G}^A_{2i+1} \otimes \mathbf{G}^B_{2i+1}
\right]$} \\
&= \mathbf{T}_{C} \cdot \left( 
\left[ \mathbf{G}^A_{2i} \,\, \mathbf{G}^A_{2i+1} \right] \otimes \left[ \mathbf{G}^B_{2i} \,\,\mathbf{G}^B_{2i+1} \right] 
\right) \\
&= \mathbf{T}_{C} \cdot 
\begin{bmatrix}  
\mathbf{I} \\
\mathbf{R}_\theta^i \\
\vdots \\
\mathbf{R}_\theta^{i(\frac{k_A}{2}-1)}
\end{bmatrix} 
\otimes 
\begin{bmatrix}  
\mathbf{I} \\
\mathbf{R}_\theta^{2i} \\
\vdots \\
\mathbf{R}_\theta^{i(\frac{k_B}{2}-1)\frac{k_A}{2}}
\end{bmatrix}.  
\end{aligned}
\end{equation}

Suppose that $\delta$ different worker nodes $i$ have finished their work. The master node obtains a $1 \times 4\delta$ tensor block list $\tilde{\mathbf{T}}_C$ as follows, considering $\tilde{\mathbf{T}}_{C_i}$ is a $1 \times 4$ tensor block list, for $i \in \mathcal{Z}_\delta$:
\begin{equation}  
\begin{aligned}  
\tilde{\mathbf{T}}_C &= [\tilde{\mathbf{T}}_{C_0}, \tilde{\mathbf{T}}_{C_1}, \ldots, \tilde{\mathbf{T}}_{C_\delta}] \\
&= \mathbf{T}_C \cdot \left(  
    \begin{bmatrix}  
        \mathbf{I} \\
        \mathbf{R}_\theta^{i_0} \\
        \vdots \\
        \mathbf{R}_\theta^{i_0\left(\frac{k_A}{2}-1\right)}  
    \end{bmatrix} \otimes  
    \begin{bmatrix}  
        \mathbf{I} \\
        \mathbf{R}_\theta^{2i_0} \\
        \vdots \\
        \mathbf{R}_\theta^{i_0\left(\frac{k_B}{2}-1\right)\frac{k_A}{2}}  
    \end{bmatrix}  
    \right. \\
& \left. \;\cdots \;  
    \begin{bmatrix}  
        \mathbf{I} \\
        \mathbf{R}_\theta^{i_\delta} \\
        \vdots \\
        \mathbf{R}_\theta^{i_\delta\left(\frac{k_A}{2}-1\right)}  
    \end{bmatrix} \otimes  
    \begin{bmatrix}  
        \mathbf{I} \\
        \mathbf{R}_\theta^{2i_\delta} \\
        \vdots \\
        \mathbf{R}_\theta^{i_\delta\left(\frac{k_B}{2}-1\right)\frac{k_A}{2}}  
    \end{bmatrix}  
\right)\\
&= \mathbf{T}_C \cdot \left(   
    \left[ \mathbf{G}^A_{2i_0} \; \mathbf{G}^A_{2i_0+1} \right] \otimes \left[ \mathbf{G}^B_{2i_0} \; \mathbf{G}^B_{2i_0+1}\right] \;\cdots \; \right. \\
&\left. \quad \left[ \mathbf{G}^A_{2i_\delta} \; \mathbf{G}^A_{2i_\delta+1} \right] \otimes \left[ \mathbf{G}^B_{2i_\delta} \;\mathbf{G}^B_{2i_\delta+1} \right]  
\right)\\
&= \mathbf{T}_C \cdot \mathbf{G}^E.
\end{aligned}
\end{equation}

Furthermore, the full-rank recovery matrix \(\mathbf{G}^E\) satisfies the polynomial bound \(\kappa(\mathbf{G}^E)=O\bigl(n^{\,n-\delta+5.5}\bigr)\)~\cite{ramamoorthy2021numerically} in our NSCTC scheme, while the Vandermonde-based recovery matrix of RSPCC \(\mathbf{G}^{E'}\) exhibits exponential growth \(\kappa(\mathbf{G}^{E'})>\Omega\bigl(e^{\,n-\delta}\bigr)\)~\cite{qiu2024resilient}, thus delivering a improvement in numerical stability.

Then, the coded tensor block list $\tilde{\mathbf{T}}_C$ is decodable by the following:
\begin{equation}  
\mathbf{T}_C = \tilde{\mathbf{T}}_C \cdot (\mathbf{G}^E)^{-1}.  
\end{equation}
The final output tensor list, denoted as \( \mathbf{T}_C \), can then be obtained. 

The NSCTC process within the CRME framework is detailed in Algorithm \ref{alg:coded_tensor_conv}. This algorithm enhances traditional CDC by enabling many-to-many high-dimensional linear tensor operations between tensor lists, ensuring numerical stability and improving computational efficiency. Its implementation within the FCDCC framework is discussed in Section IV.

\section{Flexible Coded Distributed Convolution Computing}
In this section, we embed the NSCTC scheme into the ConvLs of CNNs using APCP and KCCP schemes. By choosing specific dimensions, these partitioning methods decompose the convolution operations of input and filter tensors into encodable subtensors, which correspond to the tensor lists introduced in Section III. Furthermore, we propose an optimal partitioning scheme to achieve cost efficiency within the FCDCC framework.

\subsection{Adaptive-Padding Coded Partitioning (APCP)}
APCP employs Adaptive-Padding Partitioning to divide the input tensor into multiple overlapping subtensors, ensuring continuity. These subtensors are then encoded using the CRME scheme.
\subsubsection{Adaptive-Padding Partitioning}
To facilitate efficient distributed convolution operations in ConvLs, we partition the input tensor \(\mathbf{X} \in \mathbb{R}^{C \times (H+2p) \times (W+2p)}\) along the \(H\) dimension into \(k_A\) contiguous subtensors \(\{\mathbf{X}'_i\}_{i=0}^{k_A - 1}\), incorporating necessary overlaps to ensure continuity and validity of the convolution. The dimensions and index ranges of each subtensor \(\mathbf{X}'_i\) are precisely determined to align with the corresponding partitions of the output tensor. For simplicity, we assume that both the output height \(H'\) satisfy \(H' \equiv 0 \mod k_A\). If not, zero-padding is applied to \(\mathbf{X}\) to extend \(H'\) to the nearest multiple of \(k_A\), maintaining computational integrity and enabling efficient parallel processing.

Each subtensor \(\mathbf{X}'_i\) undergoes adaptive padding to a height \(\hat{H}\) (\(\hat{H} > \frac{H}{k_A}\)), calculated as:
\begin{equation}\label{CompH}
    \hat{H} = \left(\frac{H'}{k_A} - 1\right)s + K_H,
\end{equation}
where \(s\) is the stride and \(K_H\) is the kernel height. The starting index for each subtensor is adjusted according to:
\begin{equation}\label{CompS}
    \hat{S} = \frac{H'}{k_A} s.
\end{equation}

We define the tensor partitioning operation for a tensor \(\mathbf{T} \in \mathbb{R}^{N_1 \times N_2 \times N_3}\) as:
\begin{equation}\label{eq:tensor_partition}
\mathbf{T}' = \mathbf{T}[:, :, v:e],
\end{equation}
where \(v\) is the starting index (inclusive), \(e\) is the ending index (exclusive), and \(:\) denotes all indices along a dimension. This operation selects a contiguous subset along the third dimension, resulting in \(\mathbf{T}' \in \mathbb{R}^{N_1 \times N_2 \times (e - v)}\).

Applying this partitioning to \(\mathbf{X}\), the subtensors are defined as:
\begin{equation}\label{eq:P_A}
\mathbf{X}'_i = \mathbf{X}\left[:,\, i\hat{S} : i\hat{S} + \hat{H},\, :\right], \quad \text{for } i \in \mathcal{Z}_{k_A}.
\end{equation}
This partitions \(\mathbf{X}\) along the height dimension into \(k_A\) overlapping subtensors \(\mathbf{X}'_i\), each aligned with the corresponding output tensor partition.

Figure~\ref{fig:Partition} illustrates the partitioning process. In this example, an input tensor \(\mathbf{X} \in \mathbb{R}^{1 \times 10 \times 10}\) is convolved with a filter tensor \(\mathbf{K} \in \mathbb{R}^{1 \times 3 \times 3}\) using a stride \(s = 1\), where \(\hat{H} = 4\) and \(\hat{S} = 2\). Only the input tensor partitions corresponding to two output tensor blocks are shown.
\begin{figure}[htbp]
  \centering
  \includegraphics[width=0.8\columnwidth]{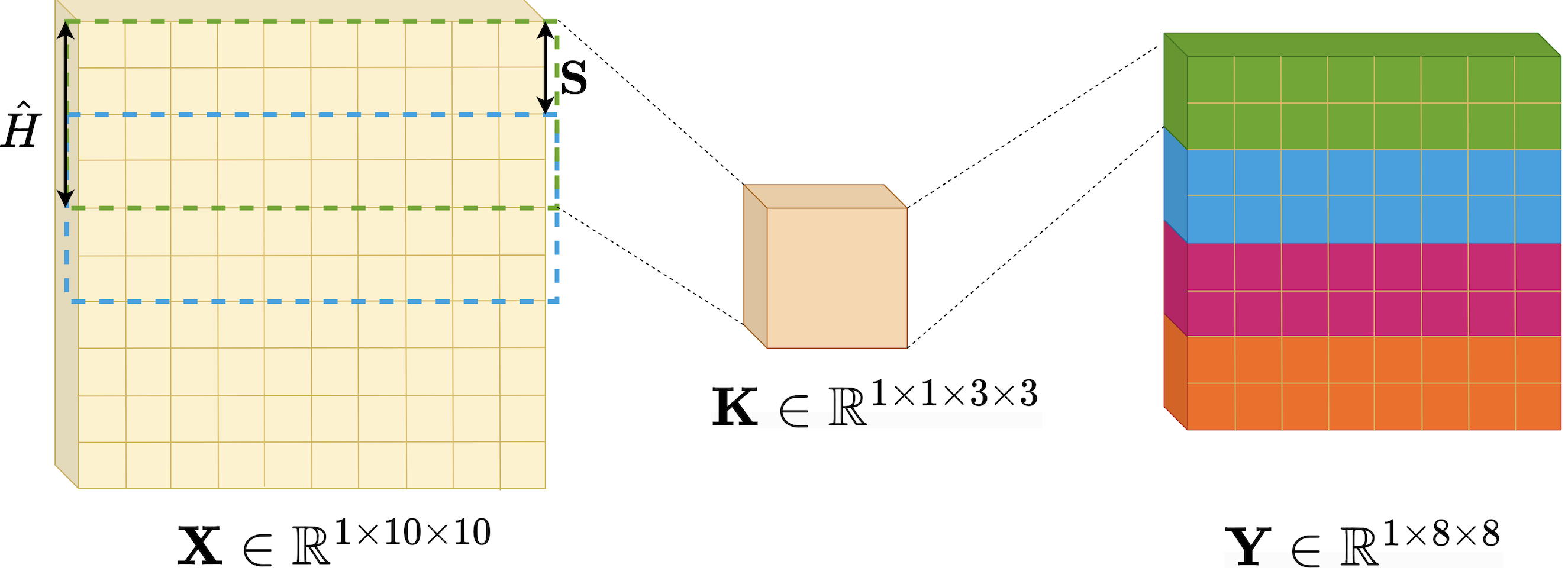}
  \caption{Illustration of the Adaptive-Padding Partitioning scheme. Colored blocks represent output tensor partitions derived from the convolution of the corresponding input tensor subtensors (dashed lines).}
  \label{fig:Partition}
\end{figure}

The complete list of input subtensors used for subsequent convolution operations is:
\begin{equation}
\mathbf{X}' = \left[\mathbf{X}'_0,\, \mathbf{X}'_1,\, \ldots,\, \mathbf{X}'_{k_A - 1}\right].
\end{equation}
\subsubsection{Input Tensor Partitions' Encoding Process}
Following Adaptive-Padding Partitioning, subtensors are encoded using CRME matrices. The encoding matrix \(\mathbf{A}\) is designed such that \(k_A\) denotes the number of original partitions, and \(2n\) represents the number of encoded partitions. The matrix \(\mathbf{A}\) is constructed using an angle \(\theta = \frac{2\pi}{q}\), where \(q \geq n\) is a odd integer. The \((i, j)\)-th blocks of the encoding matrices are then computed using rotation matrix $\mathbf{R}_\theta$:
\begin{align}\label{eq:encoding matrix A}
    \mathbf{A}_{i, j} = (\mathbf{R}_\theta)^{ji}, \quad &\text{for } i \in \mathcal{Z}_{\frac{k_A}{2}}, j \in \mathcal{Z}_{n}.
\end{align}

We introduce indexing parameters \(\alpha\) and \(\beta\), where \(\alpha = \lfloor i / 2 \rfloor\) and \(\beta = i \bmod 2\). Each subtensor \(\mathbf{X}'_{i}\) is then indexed as \(\mathbf{X}'_{\langle \alpha, \beta \rangle}\):
\begin{equation}\label{eq:X_GS}
    \mathbf{X}' = \left[ \mathbf{X}'_{\langle 0, 0\rangle}, \mathbf{X}'_{\langle 0, 1\rangle} \;\ldots \;\mathbf{X}'_{\langle \lfloor \frac{k_A}{2}\rfloor, 0\rangle}, \mathbf{X}'_{\langle \lfloor \frac{k_A}{2}\rfloor, 1\rangle} \right].
\end{equation}

The encoded tensor list \(\tilde{\mathbf{X}}'\) is then computed using the matrix \(\mathbf{A} \in \mathbb{R}^{k_A \times 2n}\):
\begin{equation}
\begin{aligned}
\label{eq:ENX}
\tilde{\mathbf{X}}' &= \mathbf{X}' \cdot \mathbf{A} \\
&= \left[ \tilde{\mathbf{X}}'_{\langle 0, 0\rangle}, \tilde{\mathbf{X}}'_{\langle 0, 1\rangle} \;\ldots \; \tilde{\mathbf{X}}'_{\langle n-1, 0\rangle}, \tilde{\mathbf{X}}'_{\langle n-1, 1\rangle} \right],
\end{aligned}
\end{equation}
where
\begin{equation}
\begin{aligned}
    \tilde{\mathbf{X}}'_{\langle i,j \rangle} &= \sum_{\alpha \in \mathcal{Z}_{\lfloor k_A / 2 \rfloor}} 
       \sum_{\beta \in \mathcal{Z}_{2}} 
       \mathbf{A}(2\alpha + \beta, 2i + j) \mathbf{X}'_{\langle \alpha, \beta \rangle}, \\
    &\quad \text{for } i \in \mathcal{Z}_{n}, \; j \in \mathcal{Z}_{2}.
\end{aligned}
\end{equation}

All encoding operations occur at the master node and the procedure of APCP scheme is presented in Algorithm \ref{alg:APCP}.
\begin{algorithm}[H]  
\caption{Adaptive-Padding Coded Partitioning (APCP)}  
\label{alg:APCP}  
\begin{algorithmic}[1]  
\renewcommand{\algorithmicrequire}{\textbf{Input:}}  
\renewcommand{\algorithmicensure}{\textbf{Output:}}  
\Require Input tensor $\mathbf{X} \in \mathbb{R}^{C \times (H + 2p) \times (W + 2p)}$, partition number $k_A$, stride $s$, kernel height $K_H$, worker node number $n$, output tensor height $H'$, $q = \text{Nextodd}(n)$  
\Ensure Encoded tensor list $\tilde{\mathbf{X}}'$  
\Statex \textbf{Adaptive-Padding Partitioning:}  
\State $\hat{H} \gets \left(\frac{H'}{k_A} - 1\right) \times s + K_H$  
\State $\hat{S} \gets \frac{H'}{k_A} \times s$  
\State Initialize $\mathbf{X}' \gets \left[\,\emptyset \,\right]$ 
\For{$i \gets 0$ \textbf{to} $k_A - 1$}  
    \State $\alpha \gets \lfloor i / 2 \rfloor$, $\beta \gets i \bmod 2$  
    \State $\mathbf{X}'_{\langle \alpha, \beta \rangle} \gets \mathbf{X}[:, i \hat{S} : \hat{H} + i \hat{S}, :]$  
    \State Append $\mathbf{X}'_{\langle \alpha, \beta \rangle}$ to $\mathbf{X}'$  
\EndFor  
\Statex \textbf{Encoding Input Tensor Partitions:}
\For{$i \gets 0$ \textbf{to} $k_A - 1$}
    \For{$j \gets 0$ \textbf{to} $n - 1$}
        \State Construct $\mathbf{A}_{i,j} = (\mathbf{R}_{2\pi/q})^{(ji)}$  
    \EndFor
\EndFor
\State Initialize $\tilde{\mathbf{X}}' \gets \left[\,\emptyset \,\right]$
\For{$i \gets 0$ \textbf{to} $n - 1$}  
    \For{$j \in \{0, 1\}$}  
        \State \small{$\tilde{\mathbf{X}}'_{\langle i, j \rangle} \gets \sum_{\alpha = 0}^{\lfloor k_A / 2 \rfloor - 1} \sum_{\beta = 0}^1 \mathbf{A}(2\alpha + \beta, 2i + j) \mathbf{X}'_{\langle \alpha, \beta \rangle}$}  
        \State Append $\tilde{\mathbf{X}}'_{\langle i, j \rangle}$ to $\tilde{\mathbf{X}}'$  
    \EndFor  
\EndFor  
\State \Return $\tilde{\mathbf{X}}'$  
\end{algorithmic}  
\end{algorithm}
\subsection{Kernel-Channel Coded Partitioning (KCCP)}
KCCP employs non-overlapping partitioning along the output channel dimension of the filter tensor, followed by encoding using the CRME scheme.
\subsubsection{Kernel-Channel Partitioning}
In the Kernel-Channel Coded Partitioning (KCCP) scheme, the master node partitions the filter tensor \(\mathbf{K} \in \mathbb{R}^{N \times C \times K_H \times K_W}\) into \(k_B\) subtensors along the output channel dimension \(N\). Specifically, each subtensor \(\mathbf{K}'_i \in \mathbb{R}^{\frac{N}{k_B} \times C \times K_H \times K_W}\) is defined as:
\begin{equation}\label{eq:P_B}
    \mathbf{K}'_i = \mathbf{K}\left[ i \tfrac{N}{k_B} : (i+1) \tfrac{N}{k_B},\; :,\; :,\; : \right], 
    \quad \text{for } i \in \mathcal{Z}_{k_B},
\end{equation}
where \( K_H \) and \( K_W \) are the kernel height and width, and \( C \) is the number of input channels. Each subtensor \(\mathbf{K}'_i\) retains the original kernel dimensions and input channels, encompassing a distinct set of output channels. This partitioning ensures that the structural integrity of the filter tensor is maintained while distributing the computational load across the worker nodes.
\subsubsection{Filter Tensor Partitions' Encoding Process}
Consistent with the structure of \(\mathbf{A}\), the \((i,j)\)-th block of encoding matrix \(\mathbf{B}\) within CRME is defined as:
\begin{align}
\mathbf{B}_{i,j} = (\mathbf{R}_\theta)^{(j \frac{k_A}{2}) i}, \quad \text{for } i \in \mathcal{Z}_{\frac{k_B}{2}}, \, j \in \mathcal{Z}_{n},
\end{align}

The partitioned filter tensor list \(\mathbf{K}'\) is then organized similarly to \eqref{eq:X_GS}:
\begin{equation}\label{eq:K_GS}
\begin{aligned}
    \mathbf{K}' &= \left[\mathbf{K}'_{\langle 0,0\rangle}, \mathbf{K}'_{\langle 0,1\rangle} \;\ldots \; \mathbf{K}'_{\langle \lfloor \frac{k_B}{2}\rfloor,0\rangle}, \mathbf{K}'_{\langle \lfloor \frac{k_B}{2}\rfloor,1\rangle}\right].
\end{aligned}
\end{equation}

The encoded tensor list \(\tilde{\mathbf{K}}'\) is then computed using the matrix \(\mathbf{B}\) as follows:
\begin{equation}\label{eq:ENK}
\begin{aligned}
\tilde{\mathbf{K}}' &= \mathbf{K}' \cdot \mathbf{B}\\
&= \left[ \tilde{\mathbf{K}}'_{\langle 0, 0\rangle}, \tilde{\mathbf{K}}'_{\langle 0, 1\rangle} \;\ldots \; \tilde{\mathbf{K}}'_{\langle n-1, 0\rangle}, \tilde{\mathbf{K}}'_{\langle n-1, 1\rangle} \right],
\end{aligned}
\end{equation}
where
\begin{equation}
\begin{aligned}
    \tilde{\mathbf{K}}'_{\langle i,j \rangle} &= \sum_{\alpha \in \mathcal{Z}_{\lfloor k_B / 2 \rfloor}} 
       \sum_{\beta \in \mathcal{Z}_{2}} 
       \mathbf{B}(2\alpha + \beta, 2i + j) \mathbf{K}'_{\langle \alpha, \beta \rangle}, \\
    &\quad \text{for } i \in \mathcal{Z}_{n}, \; j \in \mathcal{Z}_{2}.
\end{aligned}
\end{equation}
All encoding operations are completed at the master node. The procedure of KCCP scheme is presented in Algorithm 3.
\begin{algorithm}[H]  
\caption{Kernel-Channel Coded Partitioning (KCCP)}  
\label{alg:KCCP}  
\begin{algorithmic}[1]  
\renewcommand{\algorithmicrequire}{\textbf{Input:}}  
\renewcommand{\algorithmicensure}{\textbf{Output:}}  
\Require Filter Tensor $\mathbf{K} \in \mathbb{R}^{N \times C \times K_H \times K_W}$, partition number $k_B$, worker node number $n$, $q = \text{Nextodd}(n)$  
\Ensure Encoded filter tensor list $\tilde{\mathbf{K}}$  

\Statex \textbf{Kernel-Channel Partitioning:}  
\State Initialize $\mathbf{K}'\gets \left[\,\emptyset \,\right]$
\For{$i \gets 0$ \textbf{to} $k_B - 1$}  
    \State $\alpha \gets \lfloor i / 2 \rfloor$, $\beta \gets i \bmod 2$  
    \State $\mathbf{K}'_{\langle \alpha, \beta \rangle} \gets \mathbf{K}[i \cdot \frac{N}{k_B} : (i+1) \cdot \frac{N}{k_B}, :, :, :]$  
    \State Append $\mathbf{K}'_{\langle \alpha, \beta \rangle}$ to $\mathbf{K}'$  
\EndFor  

\Statex \textbf{Encoding Filter Tensor Partitions:}
\For{$i \gets 0$ \textbf{to} $k_B - 1$}
    \For{$j \gets 0$ \textbf{to} $n - 1$}
        \State Construct $\mathbf{B}_{i,j} = (\mathbf{R}_{2\pi/q})^{(j\frac{k_A}{2})i}$
    \EndFor  
\EndFor 
\State Initialize $\tilde{\mathbf{K}}\gets \left[\,\emptyset \,\right]$
\For{$i \gets 0$ \textbf{to} $n - 1$}  
    \For{$j \in \{0, 1\}$}  
        \State \small{$\tilde{\mathbf{K}}_{\langle i, j \rangle} \gets \sum_{\alpha = 0}^{\lfloor k_B / 2 \rfloor - 1} \sum_{\beta = 0}^1 \mathbf{B}(2\alpha + \beta, 2i + j) \mathbf{K}'_{\langle \alpha, \beta \rangle}$}  
        \State Append $\tilde{\mathbf{K}}_{\langle i, j \rangle}$ to $\tilde{\mathbf{K}}$  
    \EndFor  
\EndFor  

\State \Return $\tilde{\mathbf{K}}$  
\end{algorithmic}  
\end{algorithm}

\subsection[Upload and Computing Phase]{Upload and Computing Phase}

Different from classical CDC, FCDCC's upload and computing phase consists of these three steps:

\subsubsection{Pairwise Tensor Upload}
After completing APCP and KCCP, the master node transmits two pairwise coded tensor partitions to each worker
\(i\in\mathcal{Z}_n\), namely
\(\tilde{\mathbf{X}}_{\langle i,\beta_1\rangle}\) and
\(\tilde{\mathbf{K}}_{\langle i,\beta_2\rangle}\) with
\((\beta_1,\beta_2)\in\mathcal{Z}_2\times\mathcal{Z}_2\).

\subsubsection[Pairwise Tensor Convolution]{Pairwise Tensor Convolution}  
Upon reception, worker \(i\) performs pairwise tensor convolutions worker to get four output tensors \(\tilde{\mathbf{Y}}_{\langle i, \beta_3 \rangle} \in \mathbb{R}^{\frac{N}{k_B} \times \frac{H'}{k_A} \times W'}\) for \(\beta_3 \in \mathcal{Z}_4\), defined as follows:
\begin{equation}
\begin{aligned}
    &[\tilde{\mathbf{X}}_{\langle i, 0 \rangle}, \tilde{\mathbf{X}}_{\langle i, 1 \rangle}] * [\tilde{\mathbf{K}}_{\langle i, 0 \rangle}, \tilde{\mathbf{K}}_{\langle i, 1 \rangle}] \\
    &= [\tilde{\mathbf{Y}}_{\langle i, 0 \rangle}, \tilde{\mathbf{Y}}_{\langle i, 1 \rangle}, \tilde{\mathbf{Y}}_{\langle i, 2 \rangle}, \tilde{\mathbf{Y}}_{\langle i, 3 \rangle}]
\end{aligned}
\end{equation}
where
\begin{equation}
\tilde{\mathbf{Y}}_{\langle i, 2\beta_2 + \beta_1 \rangle} = \tilde{\mathbf{X}}_{\langle i, \beta_1 \rangle} * \tilde{\mathbf{K}}_{\langle i, \beta_2 \rangle},
\label{eq:indexing_relation}
\end{equation}
for \(i \in \mathcal{Z}_n\) and \((\beta_1, \beta_2) \in \mathcal{Z}_2 \times \mathcal{Z}_2\).

\subsubsection[Channel-wise concatenation]{Channel-wise concatenation}  
Each worker node \(i\) concatenates its convolution results along the channel dimension into a tensor \(\tilde{\mathbf{Y}}_i \in \mathbb{R}^{\frac{4N}{k_B} \times \frac{H'}{k_A} \times W'}\). This operation is represented as:
\begin{equation}\label{eq:output_order_optimized}
\tilde{\mathbf{Y}}_i = \operatorname{concat}_{\text{axis}=0}\left\{ \tilde{\mathbf{Y}}_{\langle i, 0 \rangle}, \tilde{\mathbf{Y}}_{\langle i, 1 \rangle}, \tilde{\mathbf{Y}}_{\langle i, 2 \rangle}, \tilde{\mathbf{Y}}_{\langle i, 3 \rangle} \right\},
\end{equation}
where \(\operatorname{concat}_{\text{axis}=0}\) denotes concatenation along the first dimension (channel dimension).

Upon completion of the channel-wise concatenation operation, the encoded tensor results \(\tilde{\mathbf{Y}}_i\) are immediately transmitted back to the master node for further processing.

Algorithm \ref{alg:FCDCC} presents a detailed overview of the master node's upload and computing phase following APCP and KCCP, as well as the distributed convolution performed by worker nodes.
\begin{algorithm}[H]  
\caption{Upload and Computing in FCDCC}  
\label{alg:FCDCC}  
\begin{algorithmic}[1]  
\renewcommand{\algorithmicrequire}{\textbf{Input:}}  
\renewcommand{\algorithmicensure}{\textbf{Output:}}  
\Require Encoded input tensor list $\tilde{\mathbf{X}}$, encoded filter tensor list $\tilde{\mathbf{K}}$, number of worker nodes $n$  
\Ensure Distributed convolution results $\{\tilde{\mathbf{Y}}_i\}_{i=0}^{n-1}$  

\Statex \textbf{Upload Data to Worker Nodes:}
\For{$i \gets 0$ \textbf{to} $n - 1$}
    \State Transmit tensor list $[\tilde{\mathbf{X}}_{\langle i, 0 \rangle}, \tilde{\mathbf{X}}_{\langle i, 1 \rangle}]$ to worker node $i$
    \State Transmit tensor list $[\tilde{\mathbf{K}}_{\langle i, 0 \rangle}, \tilde{\mathbf{K}}_{\langle i, 1 \rangle}]$ to worker node $i$
\EndFor

\Statex \textbf{Parallel Convolution Computation:}
\ForAll{$i \gets 0$ \textbf{to} $n - 1$} \textbf{in parallel on worker node $i$}
    \For{$\beta_1 \gets 0$ \textbf{to} $1$}  
        \For{$\beta_2 \gets 0$ \textbf{to} $1$}  
            \State $\tilde{\mathbf{Y}}_{\langle i, 2\beta_2 + \beta_1 \rangle} \gets \tilde{\mathbf{X}}_{\langle i, \beta_1 \rangle} \ast \tilde{\mathbf{K}}_{\langle i, \beta_2 \rangle}$  
        \EndFor  
    \EndFor  
    \State $\tilde{\mathbf{Y}}_i \gets \operatorname{concat}_{\text{axis}=0}\left(\tilde{\mathbf{Y}}_{\langle i, 0 \rangle}, \tilde{\mathbf{Y}}_{\langle i, 1 \rangle}, \tilde{\mathbf{Y}}_{\langle i, 2 \rangle}, \tilde{\mathbf{Y}}_{\langle i, 3 \rangle}\right)$  
\EndFor  

\State \Return $\{\tilde{\mathbf{Y}}_i\}_{i=0}^{n-1}$  
\end{algorithmic}  
\end{algorithm}
\subsection{Decoding and Merge Phase in FCDCC}

The decoding and merge phase of the FCDCC framework is designed to systematically recover the complete output tensor by processing encoded subresults obtained from the worker nodes. A key distinction from conventional CDC schemes lies in FCDCC's employment of specific tensor manipulation primitives, namely vectorization ($\operatorname{vec}(\cdot)$) and reshaping ($\operatorname{reshape}(\cdot)$), complemented by concatenation operations performed explicitly along the spatial and channel dimensions of the tensor. This recovery process is outlined as:
\subsubsection{Constructing the Joint Encoding Matrix}:  
The master node constructs the joint encoding matrix \(\mathbf{G}\) using the Kronecker product of the encoding matrices \(\mathbf{A}\) and \(\mathbf{B}\):
\begin{equation}\label{eq:kronecker}
\mathbf{G} = \mathbf{A} \otimes \mathbf{B} = \left[\mathbf{A}_0 \otimes \mathbf{B}_0 \,\, | \,\, \ldots \,\, | \,\, \mathbf{A}_{n-1} \otimes \mathbf{B}_{n-1}\right],
\end{equation}
where \(\mathbf{A}_i\) and \(\mathbf{B}_i\) are the \(i\)-th column blocks of \(\mathbf{A}\) and \(\mathbf{B}\), respectively.

\subsubsection{Forming the Decoding Matrix}  
    Upon receiving the earliest \(\delta = \frac{k_A k_B}{4}\) encoded convolution results from worker nodes, the master node forms the index set \(\mathcal{I} = \{i_1, i_2, \dots, i_{\delta}\}\) ($|\mathcal{I}| = \delta $) based on their corresponding worker node indices \(i\). The recovery matrix \(\mathbf{E} \in \mathbb{R}^{k_A k_B \times k_A k_B}\) is then constructed using the corresponding column blocks of \(\mathbf{G}\):
    \begin{equation}\label{eq:E}
        \mathbf{E} = \left[\mathbf{G}_{i_1} \,\, \mathbf{G}_{i_2} \,\, \cdots \,\, \mathbf{G}_{i_{\delta}}\right].
    \end{equation}
    
    The decoding matrix \(\mathbf{D}\) is then obtained by inverting the recovery matrix \(\mathbf{E}\):
    \begin{equation}
        \mathbf{D} = \mathbf{E}^{-1}.
    \end{equation}

    \subsubsection{Vectorizing and Concatenating Encoded Results}
    The master node collects the encoded convolution results \(\tilde{\mathbf{Y}}_i\) from worker nodes in the index set \(\mathcal{I}\). Each tensor block \(\tilde{\mathbf{Y}}_{\langle i, \beta_3 \rangle} \in \mathbb{R}^{\frac{N}{k_B} \times \frac{H'}{k_A} \times W'}\) in \(\tilde{\mathbf{Y}}_i\) is vectorized into a column vector \(\tilde{\mathbf{Y}}_{\langle i, \beta_3 \rangle, \text{vec}} \in \mathbb{R}^{\left(\frac{N}{k_B} \frac{H'}{k_A} W'\right) \times 1}\) using the \(\operatorname{vec}\) operation, which flattens the tensor in lexicographical order. Then, the vectorized tensors are concatenated along the column axis to form \(\tilde{\mathbf{Y}}_{\text{vec}} \in \mathbb{R}^{\left(\frac{N}{k_B} \frac{H'}{k_A} W'\right) \times k_A k_B}\):
    \begin{equation}\label{eq:vect_concatenated_tensor}
        \tilde{\mathbf{Y}}_{\text{vec}} = \operatorname{concat}_{\text{axis}=1}\left\{ \tilde{\mathbf{Y}}_{i, \text{vec}} \mid i \in \mathcal{I} \right\}.
    \end{equation}

    \subsubsection{Decoding the Vectorized Results}
    The decoded vector \(\mathbf{Y}_{\text{vec}}\) is obtained through matrix multiplication with the decoding matrix \(\mathbf{D}\):
    \begin{equation}\label{decodeeq}
        \mathbf{Y}_{\text{vec}} = \tilde{\mathbf{Y}}_{\text{vec}} \mathbf{D}.
    \end{equation}

     \subsubsection{Reshaping to Tensor Blocks} 
    The vector \(\mathbf{Y}_{\text{vec}}\) is partitioned and reshaped to reconstruct the individual tensor blocks \(\mathbf{Y}_{(u_A, u_B)} \in \mathbb{R}^{\frac{N}{k_B} \times \frac{H'}{k_A} \times W'}\), where \(u_A \in \mathcal{Z}_{k_A}\) and \(u_B \in \mathcal{Z}_{k_B}\). The mapping from column indices to tensor blocks is defined by:
    \begin{equation}
        i = u_B k_A + u_A.
    \end{equation}
    Each tensor block is obtained by:
    \begin{equation}
        \mathbf{Y}_{(u_A, u_B)} = \operatorname{reshape}\left(\mathbf{Y}_{\text{vec}}[:, i], \left(\frac{N}{k_B}, \frac{H'}{k_A}, W'\right)\right).
    \end{equation}
    The \(\operatorname{reshape}\) function converts the vectorized data \(\mathbf{Y}_{\text{vec}}[:, i]\) back into its original three-dimensional tensor form, thereby restoring the spatial and channel structure of the  output tensor. 

    \subsubsection{Assembling the Final Output Tensor}
    First, concatenate tensors along axis = 1 (height dimension \(\frac{H'}{k_A}\)), grouping blocks with the same \(u_B\) index but different \(u_A\) indices:
    \begin{equation}
    \scalebox{0.93}{$
    \mathbf{Y}_{u_B} = \operatorname{concat}_{\text{axis}=1}\left\{ \mathbf{Y}_{(i, u_B)} \mid \forall i \in \mathcal{Z}_{k_A} \right\}, \text{for}\ u_B \in \mathcal{Z}_{k_B}.
    $}
    \end{equation}
    
    Next, concatenate the intermediate tensors \(\mathbf{Y}_{u_B}\), now each of dimensions \(\mathbb{R}^{\frac{N}{k_B} \times H' \times W'}\), along axis = 0 (channel dimension \(\frac{N}{k_B}\)):
    \begin{equation}
    \mathbf{Y} = \operatorname{concat}_{\text{axis}=0}\left\{ \mathbf{Y}_{u_B} \mid \forall u_B \in \mathcal{Z}_{k_B} \right\}.
    \end{equation}

    This method ensures that all tensor blocks \(\mathbf{Y}_{(u_A, u_B)}\) are systematically reassembled into the final output tensor \(\mathbf{Y}\).

The entire decoding and merging process is summarized in Algorithm~\ref{alg:FCDCC_Decoding}.

\begin{algorithm}[H]
\caption{Decoding and Merge in FCDCC}
\label{alg:FCDCC_Decoding}
\begin{algorithmic}[1]
\Require Encoded results \(\{\tilde{\mathbf{Y}}_i\}_{i \in \mathcal{I}}\), joint encoding matrix \(\mathbf{G}\), index set \(\mathcal{I}\)
\Ensure Output tensor \(\mathbf{Y} \in \mathbb{R}^{N \times H' \times W'}\)

\State \textbf{Step 1:} Construct the recovery matrix \(\mathbf{E}\)
\State \(\mathbf{E} \gets [\mathbf{G}_{i_1} \,\, \mathbf{G}_{i_2} \,\, \cdots \,\, \mathbf{G}_{i_{\delta}}]\), where \(i_j \in \mathcal{I}\)

\State \textbf{Step 2:} Compute the decoding matrix \(\mathbf{D}\)
\State \(\mathbf{D} \gets \mathbf{E}^{-1}\)

\State \textbf{Step 3:} Vectorize and concatenate encoded results
\For{\(i \in \mathcal{I}\)}
    \State \(\tilde{\mathbf{Y}}_{i, \text{vec}} \gets \operatorname{vec}(\tilde{\mathbf{Y}}_{\langle i, \beta_3 \rangle})\) for all \(\beta_3\)
\EndFor
\State \(\tilde{\mathbf{Y}}_{\text{vec}} \gets \operatorname{concat}_{\text{axis}=1}\{\tilde{\mathbf{Y}}_{i, \text{vec}} \mid i \in \mathcal{I}\}\)

\State \textbf{Step 4:} Decode the vectorized results
\State \(\mathbf{Y}_{\text{vec}} \gets \tilde{\mathbf{Y}}_{\text{vec}} \mathbf{D}\)

\State \textbf{Step 5:} Reshape to tensor blocks
\For{\(u_B = 0\) to \(k_B - 1\)}
    \For{\(u_A = 0\) to \(k_A - 1\)}
        \State \(i \gets u_B k_A + u_A\)
        \State \(\mathbf{Y}_{(u_A, u_B)} \gets \operatorname{reshape}\left(\mathbf{Y}_{\text{vec}}[:, i], \left(\tfrac{N}{k_B}, \tfrac{H'}{k_A}, W'\right)\right)\)
    \EndFor
\EndFor

\State \textbf{Step 6:} Assemble the final output tensor
\For{\(u_B = 0\) to \(k_B - 1\)}
    \State \(\mathbf{Y}_{u_B} \gets \operatorname{concat}_{\text{axis}=1}\{\mathbf{Y}_{(u_A, u_B)} \mid u_A \in \mathcal{Z}_{k_A}\}\)
\EndFor
\State \(\mathbf{Y} \gets \operatorname{concat}_{\text{axis}=0}\{\mathbf{Y}_{u_B} \mid u_B \in \mathcal{Z}_{k_B}\}\)

\State \Return \(\mathbf{Y}\)
\end{algorithmic}
\end{algorithm}

\subsection{Optimization of Partitioning Parameters \texorpdfstring{\((k_A, k_B)\)}{(kA, kB)} in the FCDCC Framework}

In the FCDCC framework, the primary costs per worker node arise from communication, computation, and storage. The objective is to minimize the total cost \(U(k_A, k_B)\) per node by selecting optimal partitioning parameters \(k_A\) and \(k_B\), given a fixed total number of subtasks \(Q = k_A k_B\). We define the communication volumes \(V_{\text{comm\_up}}\) and \(V_{\text{comm\_down}}\), the storage volume \(V_{\text{store}}\), and the computation workload \(M_{\text{comp}}\) per node. Detailed complexity analysis is provided in Section V.

Given the cost per tensor entry \(\lambda_{\text{comm}}\) for communication, the upload and download communication costs are:
\begin{equation}
C_{\text{comm\_up}} = \lambda_{\text{comm}} V_{\text{comm\_up}} = \lambda_{\text{comm}} \frac{4C(H + 2p)(W + 2p)}{k_A},
\end{equation}
\begin{equation}
C_{\text{comm\_down}} = \lambda_{\text{comm}} V_{\text{comm\_down}} = \lambda_{\text{comm}} \frac{4N H' W'}{Q}.
\end{equation}

The total communication cost \(C_{\text{comm}}\) is then expressed as:
\begin{equation}
\label{eq:comm_cost_detailed}
\begin{aligned}
C_{\text{comm}} &= C_{\text{comm\_up}} + C_{\text{comm\_down}} = \lambda_{\text{comm}} (V_{\text{comm\_up}} + V_{\text{comm\_down}})\\
&= \lambda_{\text{comm}} \left( \frac{4C(H + 2p) (W + 2p)}{k_A} + \frac{4N H' W'}{Q} \right).
\end{aligned}
\end{equation}

The computation cost per node, with \(\lambda_{\text{comp}}\) as the cost per MAC operation, is:
\begin{equation}
C_{\text{comp}} = \lambda_{\text{comp}} M_{\text{comp}} = \lambda_{\text{comp}} \frac{4CN H W K_H K_W}{s^2 Q}.
\label{eq:comp_cost}
\end{equation}

The storage cost per node, influenced by \(\lambda_{\text{store}}\) per tensor entry, is:
\begin{equation}
C_{\text{store}} = \lambda_{\text{store}} V_{\text{store}} = \lambda_{\text{store}} \frac{2N C K_H K_W}{k_B}.
\label{eq:store_cost}
\end{equation}

The total cost per node is the sum of the above costs:
\begin{equation}
\label{eq:total_cost}
\begin{aligned}
U(k_A) &= C_{\text{comm}} + C_{\text{comp}} + C_{\text{store}} \\
&= \lambda_{\text{comm}} \left( \frac{4C(H + 2p)(W + 2p)}{k_A} + \frac{4N H' W'}{Q} \right) \\
&\quad + \lambda_{\text{comp}} \frac{4CN H W K_H K_W}{s^2 Q} + \lambda_{\text{store}} \frac{2N C K_H K_W k_A}{Q}.
\end{aligned}
\end{equation}

Letting constants \(a_1 = \lambda_{\text{store}} \frac{2N C K_H K_W}{Q}\), \(a_2 = \lambda_{\text{comm}} 4C(H + 2p)(W + 2p)\), and \(a_3 = \lambda_{\text{comm}} \frac{4N H' W'}{Q} + \lambda_{\text{comp}} \frac{4CN H W K_H K_W}{s^2 Q}\), we simplify \(U(k_A)\):
\begin{equation}
U(k_A) = a_1 k_A + a_2 \frac{1}{k_A} + a_3.
\end{equation}

\begin{lemma}[Convexity of the Cost Function]
The total cost function \(U(k_A)\) is strictly convex for \(k_A > 0\).
\end{lemma}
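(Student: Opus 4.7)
The plan is to verify strict convexity by direct computation of the second derivative, since $U(k_A)$ has been reduced to the closed form $U(k_A)=a_1 k_A + a_2 k_A^{-1} + a_3$ with coefficients $a_1,a_2,a_3$ defined just above the lemma. Convexity of a $C^2$ function of a single real variable on an open interval reduces to the sign of $U''$, so the argument is essentially a one-line calculation once the positivity of the relevant coefficient is justified.

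First I would differentiate twice: $U'(k_A)=a_1 - a_2 k_A^{-2}$ and $U''(k_A)=2 a_2 k_A^{-3}$. On the domain $k_A>0$ the factor $k_A^{-3}$ is strictly positive, so the sign of $U''$ is determined entirely by the sign of $a_2$.

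Next I would record that $a_2 = \lambda_{\text{comm}}\cdot 4 C (H+2p)(W+2p)$ is a product of strictly positive quantities: the per-entry communication cost $\lambda_{\text{comm}}>0$, the channel count $C\ge 1$, and the padded spatial dimensions $H+2p,W+2p\ge 1$. Hence $a_2>0$, which gives $U''(k_A)>0$ for every $k_A>0$, establishing strict convexity on $(0,\infty)$. The coefficients $a_1,a_3$ do not enter this argument: $a_1$ contributes only a linear term (second derivative zero) and $a_3$ is a constant with respect to $k_A$, so their signs are irrelevant for convexity.

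I do not expect any genuine obstacle here; the only point worth flagging is the mild abuse of treating $k_A$ as a continuous variable even though the optimization in \eqref{eq:optimization_problem} restricts $k_A$ to the discrete set $\mathcal{S}$. Since convexity is a statement about the real-variable extension of $U$, it holds in the stated form and will be used later only to justify that the discrete minimizer over $\mathcal{S}$ can be found by rounding the unique continuous minimizer $k_A^\ast=\sqrt{a_2/a_1}$ to feasible values in $\mathcal{S}$.
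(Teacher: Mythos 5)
Your proposal is correct and follows essentially the same route as the paper: compute \(U''(k_A) = 2a_2/k_A^3\) and conclude positivity on \(k_A>0\) from \(a_2>0\). You add slightly more detail than the paper by explicitly justifying \(a_2>0\) as a product of positive quantities and by flagging the continuous-relaxation caveat, but the argument is identical in substance.
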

\begin{proof}
The second derivative of \(U(k_A)\) with respect to \(k_A\) is:
\begin{equation}
\frac{d^2U}{dk_A^2} = \frac{2a_2}{k_A^3} > 0 \quad \text{for all } k_A > 0,
\end{equation}
since \(a_2 > 0\). Therefore, \(U(k_A)\) is strictly convex.
\end{proof}

\begin{theorem}[Optimal Partitioning]
The optimal partitioning parameters are given by:
To find the optimal \(k_A\), we set the first derivative to zero:
\begin{equation}
\frac{dU}{dk_A} = a_1 - \frac{a_2}{k_A^2} = 0 \implies k_A^* = \sqrt{\frac{a_2}{a_1}}.
\end{equation}

Substituting back the expressions for \(a_1\) and \(a_2\):
\begin{equation}
k_A^* = \sqrt{\frac{2 \lambda_{\text{comm}} (H + 2p)(W + 2p) Q}{\lambda_{\text{store}} N K_H K_W}}.
\end{equation}
The corresponding \(k_B^*\) is:
\begin{equation}
k_B^* = \frac{Q}{k_A^*} = \sqrt{\frac{\lambda_{\text{store}} N K_H K_W Q}{2 \lambda_{\text{comm}} (H + 2p)(W + 2p)}}.
\end{equation}
\end{theorem}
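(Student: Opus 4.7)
The plan is to exploit the strict convexity established in the preceding lemma to reduce the optimization to a single first-order condition. Since \(U(k_A)=a_1 k_A + a_2/k_A + a_3\) with \(a_1,a_2>0\) and \(a_3\) independent of \(k_A\) (because \(a_3\) depends only on \(Q\), not on how \(Q\) splits), the minimizer of this univariate function over \(k_A>0\) is unique and coincides with the unique stationary point. The constraint \(k_A k_B = Q\) then pins \(k_B^*\) down automatically.

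First I would differentiate: \(\frac{dU}{dk_A} = a_1 - a_2/k_A^2\). Setting this to zero yields \(k_A^* = \sqrt{a_2/a_1}\), and the convexity lemma guarantees this is a global minimum rather than a maximum or saddle (so no second-derivative verification is needed beyond quoting the lemma). Next I would substitute the explicit forms
\[
a_1 = \lambda_{\text{store}}\,\frac{2NCK_HK_W}{Q},\qquad a_2 = 4\lambda_{\text{comm}}\,C(H+2p)(W+2p),
\]
into \(\sqrt{a_2/a_1}\). The \(C\) factors cancel, the constants \(2\) and \(4\) combine to a \(2\) inside the radical, and I obtain the closed form stated for \(k_A^*\). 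Finally, \(k_B^* = Q/k_A^*\) is a direct algebraic step, and a brief rationalization puts it in the symmetric form printed in the theorem.

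The one substantive subtlety — the main potential obstacle — is feasibility. The optimization problem in \eqref{eq:optimization_problem} restricts \((k_A,k_B)\) to the set \(\mathcal{S}\times\mathcal{S}\) with \(k_Ak_B=Q\), i.e., positive integers that are either \(1\) or multiples of \(\ell\). The formula \(\sqrt{a_2/a_1}\) is in general irrational and need not lie in \(\mathcal{S}\), so strictly speaking the theorem as stated is the continuous relaxation. I would address this explicitly in the proof: by strict convexity of \(U\), the discrete minimizer over \(\mathcal{S}\cap\{d:d\mid Q\}\) is obtained by projecting \(k_A^*\) onto the nearest feasible divisor (evaluating \(U\) at the two admissible neighbors of \(\sqrt{a_2/a_1}\) and keeping the smaller). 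This rounding step is what converts the analytic critical point into an implementable partition, and it is the only place where one must be careful rather than mechanical.

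Everything else — the differentiation, the algebraic substitution, and the derivation of \(k_B^*\) from the constraint — is routine once convexity is in hand, so I would keep those steps brief and devote the bulk of the written proof to the justification that the continuous optimum correctly guides the discrete choice within \(\mathcal{S}\).
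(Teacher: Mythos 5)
Your proposal is correct and follows essentially the same route as the paper: set $\frac{dU}{dk_A}=a_1-a_2/k_A^2=0$, invoke the convexity lemma to certify a global minimum, substitute the expressions for $a_1$ and $a_2$ (the $C$ factors cancel and $4/2$ gives the $2$ under the radical), and obtain $k_B^*=Q/k_A^*$. The feasibility subtlety you flag is exactly what the paper handles in the remark immediately following the theorem, by rounding $k_A^*$ to the nearest admissible value in $\mathcal{S}$ and comparing costs, so your treatment matches the paper's.
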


Given the strict convexity of the function \(U_{k_A, k_B}\), the optimal integer solution is found by rounding \(k_A^*\) to the nearest value in the set \(\mathcal{S} = \{x \in \mathbb{Z}^+ \mid x = 1 \text{ or } x \equiv 0 \pmod{2}\}\) and evaluating the corresponding costs. If both \(k_A^*\) and \(k_B^* = \frac{Q}{k_A^*}\) belong to \(\mathcal{S}\), they represent the unique global minimum. Otherwise, compute \(U(k_A)\) for the nearest even integers to \(k_A^*\), select the value that minimizes the cost, and determine \(k_B^*\) by \(k_B^* = \frac{Q}{k_A^*}\).

\begin{lemma}[Asymptotic Behavior]
As \(Q \to \infty\), the optimal partitioning parameters scale as \(k_A^* = \Theta(\sqrt{Q})\) and \(k_B^* = \Theta(\sqrt{Q})\).
\end{lemma}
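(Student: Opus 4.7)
The plan is to read off the asymptotics directly from the closed-form expressions for $k_A^*$ and $k_B^*$ established in the preceding Theorem, since the cost model's dependence on $Q$ is already isolated there. First I would rewrite
\[
k_A^{*} \;=\; \sqrt{\frac{2\lambda_{\text{comm}}(H+2p)(W+2p)}{\lambda_{\text{store}} N K_H K_W}}\,\sqrt{Q}, \quad k_B^{*} \;=\; \sqrt{\frac{\lambda_{\text{store}} N K_H K_W}{2\lambda_{\text{comm}}(H+2p)(W+2p)}}\,\sqrt{Q},
\]
and define the $Q$-independent positive constants
\[
c_A \;=\; \sqrt{\frac{2\lambda_{\text{comm}}(H+2p)(W+2p)}{\lambda_{\text{store}} N K_H K_W}}, \qquad c_B \;=\; \frac{1}{c_A},
\]
which are finite and strictly positive because all the architectural parameters $H,W,p,N,K_H,K_W$ and the unit costs $\lambda_{\text{comm}},\lambda_{\text{store}}$ are fixed positive quantities. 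Then $k_A^{*} = c_A\sqrt{Q}$ and $k_B^{*} = c_B\sqrt{Q}$, which by definition of $\Theta(\cdot)$ yields $k_A^{*} = \Theta(\sqrt Q)$ and $k_B^{*} = \Theta(\sqrt Q)$, and incidentally confirms the symmetry $k_A^{*}k_B^{*}=Q$.

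The only subtlety is the feasibility constraint $k_A,k_B\in\mathcal{S}=\{1\}\cup 2\mathbb{Z}^{+}$ noted after the Theorem: the true minimizer is an integer rounding of the real optimum above. I would handle this by observing that rounding $c_A\sqrt Q$ to the nearest element of $\mathcal{S}$ perturbs it by at most $2$, so the integer optimum $\widehat{k}_A^{*}$ satisfies $|\widehat{k}_A^{*}-c_A\sqrt Q|\le 2$; for $Q$ large enough this gives $\tfrac{c_A}{2}\sqrt Q \le \widehat{k}_A^{*} \le 2 c_A\sqrt Q$, preserving the $\Theta(\sqrt Q)$ bound, and the analogous statement for $\widehat{k}_B^{*}=Q/\widehat{k}_A^{*}$ follows from $Q/\widehat{k}_A^{*}\in[\tfrac{1}{2c_A}\sqrt Q,\tfrac{2}{c_A}\sqrt Q]$.

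There is essentially no hard step here: the result is a direct consequence of the explicit form obtained from the first-order optimality condition in the convex problem solved in the Theorem. The only mild care needed is in the last paragraph, to make sure the integer-feasibility restriction does not inflate or deflate the scaling exponent; this is dispatched by the $O(1)$ rounding bound above, which is absorbed into the hidden constants of the $\Theta$ notation.
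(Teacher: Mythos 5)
Your proposal is correct and follows essentially the same route as the paper: both read the $\Theta(\sqrt{Q})$ scaling directly off the closed-form expressions for $k_A^*$ and $k_B^*$ from the preceding theorem, whose constant factors are $Q$-independent. Your additional paragraph verifying that rounding to the feasible set $\mathcal{S}$ perturbs the optimum by only $O(1)$ is a small extra rigor the paper omits, but it does not change the nature of the argument.
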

\begin{proof}
From the expressions for \(k_A^*\) and \(k_B^*\), both are proportional to \(\sqrt{Q}\). Thus, as \(Q\) increases:
\begin{equation}
k_A^* = \Theta(\sqrt{Q}), \quad k_B^* = \Theta(\sqrt{Q}).
\end{equation}
This balanced partitioning optimizes the trade-off between communication and storage costs.
\end{proof}
\section{Theoretical Analysis}
This section presents a comprehensive assessment of the FCDCC framework, focusing on its resilience to stragglers, condition number analysis, and complexity evaluations of the encoding, computation, communication, and decoding phases. All analyses are based on Multiply-Accumulate (MAC) operations and tensor entry counts. We also compare the FCDCC framework with existing model parallelism methods to evaluate its efficiency in distributed CNNs.
\subsection{Resilience and Condition Number Analysis}
In the FCDCC framework, the input tensor \(\mathbf{X}\) and filter tensor \(\mathbf{K}\) are partitioned into \(k_A\) and \(k_B\) segments, respectively. Utilizing Polynomial Codes within Circulant and Rotation Matrix Embeddings (CRME), the recovery threshold is \(\delta = \frac{k_A k_B}{4}\). With \(n\) worker nodes (\(n > \delta\)), the straggler resilience capacity is \(\gamma = n - \delta\). According to \cite{ramamoorthy2021numerically}, the worst-case condition number of the CRME Vandermonde decoding matrix \(\mathbf{D}\) in the FCDCC framework is bounded by \(\mathcal{O}(n^{\gamma + c_1})\), where \(c_1 \approx 5.5\). This represents optimal numerical stability in Polynomial Codes of existing CDC schemes as analyzed in \cite{fahim2021numerically,das2021efficient}.
\subsection{Encoding Complexity Analysis}
The encoding complexity in the FCDCC framework is critical for computational overhead. We analyze the complexity for encoding input and filter tensor partitions using two methods:
\begin{itemize}
    \item \textbf{Direct Linear Combination Method:} For tensor partitions \(\mathbf{X}'_i \in \mathbb{R}^{C \times \hat{H} \times (W + 2p)}\) with \(i \in \mathcal{Z}_{k_A}\), this method requires \(2n\) operations per tensor entry, resulting in a complexity of \(\mathcal{O}(2n C (H + 2p)(W + 2p))\), since \(\hat{H} \approx \frac{H+2p}{k_A}\).

    \item \textbf{Fast Polynomial Evaluation Method:} Treating encoding as polynomial evaluation of degree \(k_A - 1\) at \(2n\) points per tensor entry, the complexity reduces to \(\mathcal{O}\left(2n C (H + 2p) (W + 2p) \frac{\log^2 k_A \log \log k_A}{k_A}\right)\) using techniques from \cite{kedlaya2011fast}.
\end{itemize}

Similarly, for filter tensor partitions \(\mathbf{K}'_i \in \mathbb{R}^{\frac{N}{k_B} \times C \times K_H \times K_W}\) with \(i \in \mathcal{Z}_{k_B}\), the complexities are \(\mathcal{O}(2n N C K_H K_W)\) and \(\mathcal{O}\left(2n N C K_H K_W \frac{\log^2 k_B \log \log k_B}{k_B}\right)\) respectively.
\subsection{Computational, Communication and Storage Complexity}
Each worker node \(i\) performs four convolution operations between the coded input tensor partitions \(\tilde{\mathbf{X}}_{\langle i, 0 \rangle}\), \(\tilde{\mathbf{X}}_{\langle i, 1 \rangle}\) and the coded filter tensor partitions \(\tilde{\mathbf{K}}_{\langle i, 0 \rangle}\), \(\tilde{\mathbf{K}}_{\langle i, 1 \rangle}\). The computational complexity per node is \(\mathcal{O}(M_{\text{comp}})\), with $M_{\text{comp}} = \frac{4C N H W K_H K_W}{s^2 k_A k_B}$.

Assuming the filter tensor is uploaded and stored after the initial iteration, the storage complexity per node is \(\mathcal{O}(V_{\text{store}})\), with \(V_{\text{store}} = 2 \frac{N}{k_B} C K_H K_W\). For each inference iteration, the upload complexity for input tensor partitions per node is \(\mathcal{O}(V_{\text{comm\_up}})\), where \(V_{\text{comm\_up}} = 2C \hat{H} (W + 2p) \approx 2C \frac{H+2p}{k_A} (W + 2p)\). Each node produces four output partitions, leading to a download complexity per node of \(\mathcal{O}(V_{\text{comm\_down}})\), with \(V_{\text{comm\_down}} = 4 \frac{N H' W'}{k_A k_B}\).

\subsection{Decoding Complexity Analysis}
In the decoding phase, operations such as tensor vectorization and unvectorization, efficiently handled by contiguous memory storage, contribute negligibly to the overall complexity compared to other terms. A naive matrix inversion approach for \(\mathbf{D} \in \mathbb{R}^{k_A k_B \times k_A k_B}\)  would typically exhibit a complexity of \(\Oh((k_A k_B)^3)\). Subsequent recovery of the output tensor, as detailed in Eq.~\eqref{decodeeq}, incurs an additional complexity of \(\Oh(k_A k_B N H' W')\).

Similar to the encoding, the application of fast polynomial evaluation techniques can reduce the matrix multiplication complexity component in decoding to \(\Oh(NH'W'\log^2 (k_A k_B) \log \log (k_A k_B))\). Furthermore, by leveraging Vandermonde-like structures for inverting matrices, the complexity of the matrix inversion step can be reduced to \(\Oh((k_A k_B)^2)\). Therefore, the optimized overall decoding complexity is \(\Oh\left(k_A^2 k_B^2 + NH'W'\log^2 (k_A k_B) \log \log (k_A k_B)\right)\).

\subsection{Theoretical Analysis of the overhead impact}\label{subsec:overhead}

The overhead intrinsic to the FCDCC framework, relative to conventional distributed computing paradigms, is primarily concentrated in the encoding and decoding phases. Within this framework, we assume the filter tensor is pre-encoded and resident at each worker node. Consequently, the aggregate overhead encompasses three principal components: (i) encoding of the input tensor, (ii) inversion of the recovery matrix, and (iii) decoding of the output tensor. Let \(Q = k_A k_B\), then the per-node workload for the convolution task is \(\Oh\left(\frac{4 C N H W K_H K_W}{s^2 Q}\right)\). It is established that parameters \(p, s, K_H, K_W\) are typically substantially smaller than \(C, N, H, W, n\). This leads to the approximations \((H+2p)(W+2p) \approx HW\) and \(H'W' \approx HW/s^2\). These approximations, along with the common assumption that \(K_H K_W/s^2 = \Oh(1)\) under typical parameter scaling, are utilized in the subsequent simplifications unless explicitly stated otherwise.

In the worst case, devoid of optimizations for encoding/decoding and matrix inversion, the overhead is expressed as \(\Oh\left(n C H W + Q^3 + Q N \frac{HW}{s^2}\right)\). This overhead becomes non-negligible when any of its constituent parts scale comparably to the per-node workload theoretically. Specifically, this occurs in the following scenarios:
\begin{itemize}
    \item [(i)] The input encoding component becomes significant if \(Q = \OmegaBig\left(\frac{N K_H K_W}{n s^2}\right)\), simplifying to \(Q = \OmegaBig\left(\frac{4N}{n}\right)\) (leveraging \(K_H K_W/s^2 = \Oh(1)\)).
    \item [(ii)] The matrix inversion component becomes significant if \(Q = \OmegaBig\left(\left(\frac{4 C N H W K_H K_W}{s^2}\right)^{1/4}\right)\), simplifying to \(Q = \OmegaBig\left((4 C N H W)^{1/4}\right)\).
    \item [(iii)] The output decoding component becomes significant if \(Q = \OmegaBig\left(2\sqrt{C K_H K_W}\right)\), simplifying to \(Q = \OmegaBig\left(2\sqrt{C}\right)\).
\end{itemize}

Conversely, through the application of fast polynomial evaluation for encoding/decoding and accelerated algorithms for matrix inversion, the refined overall overhead is \(\Oh\left(n C H W \frac{\log^2 k_A \log\log k_A}{k_A} + Q^2 + N \frac{HW}{s^2}\log^2 Q \log\log Q\right)\). Under these optimized conditions, the overhead's significance is dictated by the following scaling relationships for \(Q\):
\begin{itemize}
    \item [(i)] For the optimized encoding component: The condition is \(Q = \OmegaBig\left(\frac{4 N K_H K_W k_A}{n s^2 \log^2 k_A \log\log k_A}\right)\), which approximates to \(Q = \OmegaBig\left(\frac{4 N k_A}{n \text{polylog}(k_A)}\right)\). Considering polylogarithmic factors are sub-dominant in these regimes, this can be represented as \(Q = \OmegaBig\left(\frac{4Nk_A}{n}\right)\).
    \item [(ii)] For the optimized matrix inversion component: \(Q = \OmegaBig\left(\left(\frac{4 C N H W K_H K_W}{s^2}\right)^{1/3}\right)\), simplifying to \(Q = \OmegaBig\left((4 C N H W)^{1/3}\right)\).
    \item [(iii)] For the optimized decoding component: We get \(Q \log^2 Q \log\log Q = \OmegaBig( 4 C K_H K_W)\). This simplifies to \(Q \text{polylog}(Q) = \OmegaBig(4 C)\), often further approximated to \(Q  = \OmegaBig(4 C)\).
\end{itemize}
These conditions delineate the regimes where the respective overhead components become critical factors in the overall performance of the FCDCC framework theoretically.

\subsection{Comparison with other Model Parallelism Schemes}
Mainstream model parallelism strategies for ConvLs include:
\begin{itemize}
    \item \textbf{Spatial Partitioning:} Dividing along spatial dimensions \((H, W)\) \cite{mao2017modnn}.
    \item \textbf{Output Channel Partitioning:} Dividing along the output channel dimension \((N)\) \cite{hadidi2020toward}.
    \item \textbf{Input Channel Partitioning:} Dividing along the input channel dimension \((C)\) \cite{jia2018exploring}.
\end{itemize}
Table~\ref{tab:model_parallelism} summarizes key parameters of these methods (assuming \(p=0\)). The proposed FCDCC framework integrates the advantages of both spatial and output channel partitioning strategies while avoiding the merge operation complexity associated with input channel partitioning. 

Specifically, when \(k_A = 1\), FCDCC corresponds to the spatial partitioning approach; when \(k_B = 1\), it aligns with the output channel partitioning approach; and when \(k_Ak_B = 1\), it aligns with the baseline approach.
\begin{table*}[!t]
\renewcommand{\arraystretch}{1.3}
\caption{Comparison of Model Parallelism Methods for Convolutional Layers}
\label{tab:model_parallelism}
\centering
\resizebox{\textwidth}{!}{%
\begin{tabular}{|l|c|c|c|c|c|c|c|}
\hline
\textbf{Method} & \textbf{Division Factor} & \textbf{Nodes} & \textbf{Input Tensor} & \textbf{Filter Tensor} & \textbf{Output Tensor} & \textbf{Communication} & \textbf{Merge Operation} \\
\hline
Baseline & N/A & 1 & $C \times H \times W$ & $N \times C \times K_H \times K_W$ & $N \times H' \times W'$ & $C H W + N H' W'$ & N/A \\
\hline
Spatial Partitioning & $k_A$ & $k_A$ & $C \times \hat{H} \times W$ & $N \times C \times K_H \times K_W$ & $N \times \frac{H'}{k_A} \times W'$ & $C \hat{H} W + N \frac{H'}{k_A} W'$ & Concatenation \\
\hline
Output Channel Partitioning & $k_B$ & $k_B$ & $C \times H \times W$ & $\frac{N}{k_B} \times C \times K_H \times K_W$ & $\frac{N}{k_B} \times H' \times W'$ & $C H W + \frac{N}{k_B} H' W'$ & Concatenation \\
\hline
Input Channel Partitioning & $k_C$ & $k_C$ & $\frac{C}{k_C} \times H \times W$ & $N \times \frac{C}{k_C} \times K_H \times K_W$ & $N \times H' \times W'$ & $\frac{C}{k_C} H W + N H' W'$ & Summation \\
\hline
FCDCC & $k_A, k_B$ & $\frac{k_A k_B}{4}$ & $C \times \hat{H} \times W$ & $\frac{N}{k_B} \times C \times K_H \times K_W$ & $\frac{N}{k_B} \times \frac{H'}{k_A} \times W'$ & $C \hat{H} W + \frac{N}{k_B} \frac{H'}{k_A} W'$ & Concatenation \\
\hline
\end{tabular}%
}
\end{table*}
\section{Experiments}
This section presents the experimental evaluation of the proposed FCDCC framework, conducted on Amazon EC2. The experiments focus on the inference of a single batch across various ConvLs of LeNet, AlexNet, and VGGNet models. Performance was assessed based on time efficiency, resilience to stragglers, and numerical stability.
\subsection{Experimental Setup}
The experiments were conducted using \texttt{mpi4py} to facilitate distributed computing. A single master node managed the encoding process and distributed the encoded tensor partitions to worker nodes, which performed convolution computations independently and returned the results asynchronously. The convolution function was a basic, unoptimized implementation in PyTorch 2.4.0 (CPU). To simulate straggling effects, artificial delays were introduced using the \texttt{sleep()} function, and worker node availability was randomized using \texttt{random.random()}. Except for the naive scheme in Experiment 1, all experiments utilized \texttt{t2.micro} instances (1 vCPU, 1 GiB memory).
\subsection{Experimental Results}
\subsubsection{Experiment 1: Performance Comparison of FCDCC and Naive Scheme}
This experiment compares the performance of the FCDCC framework with a naive scheme across various CNN architectures. The naive scheme was executed on a single \texttt{i3n.xlarge} instance (4 vCPUs, 32 GiB memory), while the FCDCC scheme utilized nineteen \texttt{t2.micro} instances, configured with \(n = 18\) worker nodes and one master node, and parameters \(\delta = 16\), \(\gamma = 2\).

The Mean Squared Error (MSE) between the output tensors of the naive and FCDCC schemes was calculated as:
\begin{equation}
\text{MSE} = \frac{1}{N H' W'} \sum_{n=0}^{N-1} \sum_{h=0}^{H'-1} \sum_{w=0}^{W'-1} \left( \mathbf{P}_{n,h,w} - \mathbf{T}_{n,h,w} \right)^2,
\end{equation}
where \(\mathbf{P}\) and \(\mathbf{T}\) are the output tensors from the naive and FCDCC schemes, respectively, with dimensions \(\mathbb{R}^{N \times H' \times W'}\).

The ConvLs of LeNet-5, AlexNet, and VGGNet were evaluated using the FCDCC configuration \((k_A, k_B) = (2, 32)\). The results are summarized in Table~\ref{tab:exp1_results}.
\begin{table}[!t]
\renewcommand{\arraystretch}{1.05}
\setlength{\tabcolsep}{2pt} % 紧凑列距；如仍挤可调为 1.5pt
\caption{Performance Comparison of FCDCC and Naive Scheme Across CNN Architectures}
\label{tab:exp1_results}
\centering
\footnotesize % 如仍挤可改为 \scriptsize
\begin{tabular*}{\columnwidth}{@{\extracolsep{\fill}}%
  ll
  S[table-format=2.3]     % Naive (s)
  S[table-format=2.3]     % FCDCC (s)
  S[table-format=1.2e-2]  % MSE (scientific)
  S[table-format=2.3]     % Decode (ms)
@{}}
\toprule
\multirow{2}{*}{\textbf{Model}} & \multirow{2}{*}{\textbf{Layer}} &
\multicolumn{4}{c}{\textbf{Performance Metrics}} \\
\cmidrule(l){3-6}
& &
\multicolumn{1}{c}{\begin{tabular}[t]{@{}c@{}}\textbf{Naive}\\\textbf{(s)}\end{tabular}} &
\multicolumn{1}{c}{\begin{tabular}[t]{@{}c@{}}\textbf{FCDCC}\\\textbf{(s)}\end{tabular}} &
\multicolumn{1}{c}{\begin{tabular}[t]{@{}c@{}}\textbf{MSE}\end{tabular}} &
\multicolumn{1}{c}{\begin{tabular}[t]{@{}c@{}}\textbf{Decode}\\\textbf{(ms)}\end{tabular}} \\
\midrule
\multirow{2}{*}{LeNet-5}
& Conv1 & 0.256 & 0.016 & 1.10e-30 & 0.275 \\
& Conv2 & 0.404 & 0.017 & 3.57e-29 & 0.298 \\
\midrule
\multirow{5}{*}{AlexNet}
& Conv1 & 10.392 & 0.940 & 4.28e-28 & 2.496 \\
& Conv2 & 6.851  & 0.614 & 6.71e-28 & 1.564 \\
& Conv3 & 3.820  & 0.210 & 3.92e-27 & 0.434 \\
& Conv4 & 4.291  & 0.308 & 5.60e-27 & 0.566 \\
& Conv5 & 2.933  & 0.205 & 3.89e-27 & 0.455 \\
\midrule
\multirow{9}{*}{VGGNet}
& Conv1\_1       & \multicolumn{1}{c}{\text{N/A}} & 10.334 & \multicolumn{1}{c}{\text{N/A}} & 23.414 \\
& Conv1\_2       & \multicolumn{1}{c}{\text{N/A}} & 11.885 & \multicolumn{1}{c}{\text{N/A}} & 25.414 \\
& Conv2\_1       & 60.097 & 5.554 & 2.87e-28 & 9.763 \\
& Conv2\_2       & 63.123 & 6.351 & 4.97e-28 & 8.234 \\
& Conv3\_1       & 31.443 & 3.005 & 2.33e-27 & 6.226 \\
& Conv3\_2/3     & 37.277 & 3.659 & 3.67e-27 & 9.951 \\
& Conv4\_1       & 18.696 & 1.664 & 6.41e-27 & 1.951 \\
& Conv4\_2/3     & 24.538 & 2.222 & 1.01e-26 & 1.978 \\
& Conv5\_1/2/3   & 6.737  & 0.460 & 8.07e-27 & 0.683 \\
\bottomrule
\end{tabular*}
\end{table}
As shown in Table~\ref{tab:exp1_results}, the FCDCC scheme significantly reduces computation times across various CNN architectures and layers. For LeNet-5, reductions of 93.75\% and 95.80\% were achieved for the first and second ConvLs, respectively. In AlexNet, computation times were reduced by over 90\% across all ConvLs. For VGGNet, characterized by its depth and extended computation durations, the FCDCC scheme enabled substantial reductions. Initial layers, which could not be completed under the naive scheme due to resource constraints, were processed efficiently using FCDCC. Subsequent layers experienced computation time reductions ranging from approximately 90\% to 93\%. The MSE remained negligible, ranging from \(10^{-30}\) to \(10^{-26}\). Also, the decoding overhead amounts to only 0.1\% to 1.8\% of the worker-side computation time in FCDCC, confirming that master-side overhead is negligible compared to the actual computation at the tested scale. Moreover, we observe that this overhead ratio grows monotonically as the partition count \(Q\) approaches the dominance threshold predicted in Sec.~V.E, a trend that precisely corroborates our theoretical analysis.

These results demonstrate the efficiency of the FCDCC scheme in accelerating convolutional layer computations, particularly as the spatial dimensions \((H, W)\) of the input tensor and the number of output channels \(N\) increase. The FCDCC framework effectively distributes workloads across multiple nodes using APCP and KCCP.
\vspace{-6pt}
\begin{figure}[htbp]
  \centering
  \includegraphics[width=\columnwidth]{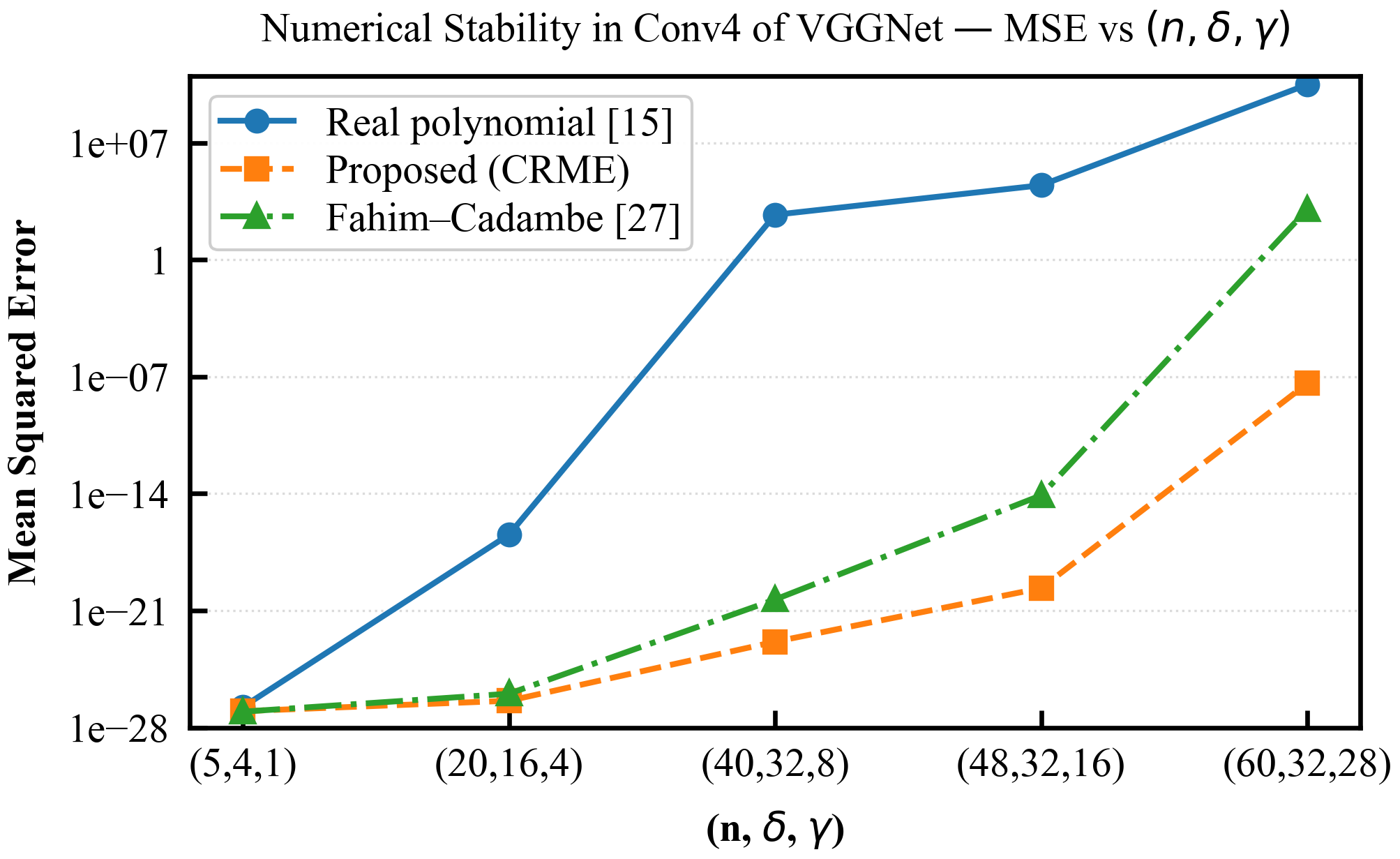}
  \vspace{-20pt} % ← 拉近图与caption
  \caption{Comparison of MSE for Different Numerically Stable CDC Methods in Conv4 of VGGNet}
  \label{EXMSE}
\end{figure}
\vspace{-6pt}
\subsubsection{Experiment 2: MSE and Condition Number Comparison of Numerically Stable CDC Schemes}
In this experiment, we evaluated the numerical stability of various CDC schemes using instances configured with \((n,\delta,\gamma)\in\{(5,4,1),(20,16,4),(40,32,8),(48,32,16),(60,32,28)\}\) for Conv4 of VGGNet. To the best of our knowledge, these numerically stable CDC schemes have not been previously extended to tensor convolution. The MSE and condition number were assessed under the considered schemes. As shown in Fig.~\ref{EXMSE} and Fig.~\ref{EXMSE2}, the proposed FCDCC framework based on CRME consistently achieves the lowest MSE and condition number, thereby exhibiting the highest resilience against numerical instability. Notably, the Real polynomial approach becomes numerically unstable at $(40,32,8)$, while the Fahim–Cadambe scheme demonstrates significant instability at $(60,32,28)$. 

\vspace{-12pt}
\begin{figure}[htbp]
  \centering
  \includegraphics[width=\columnwidth]{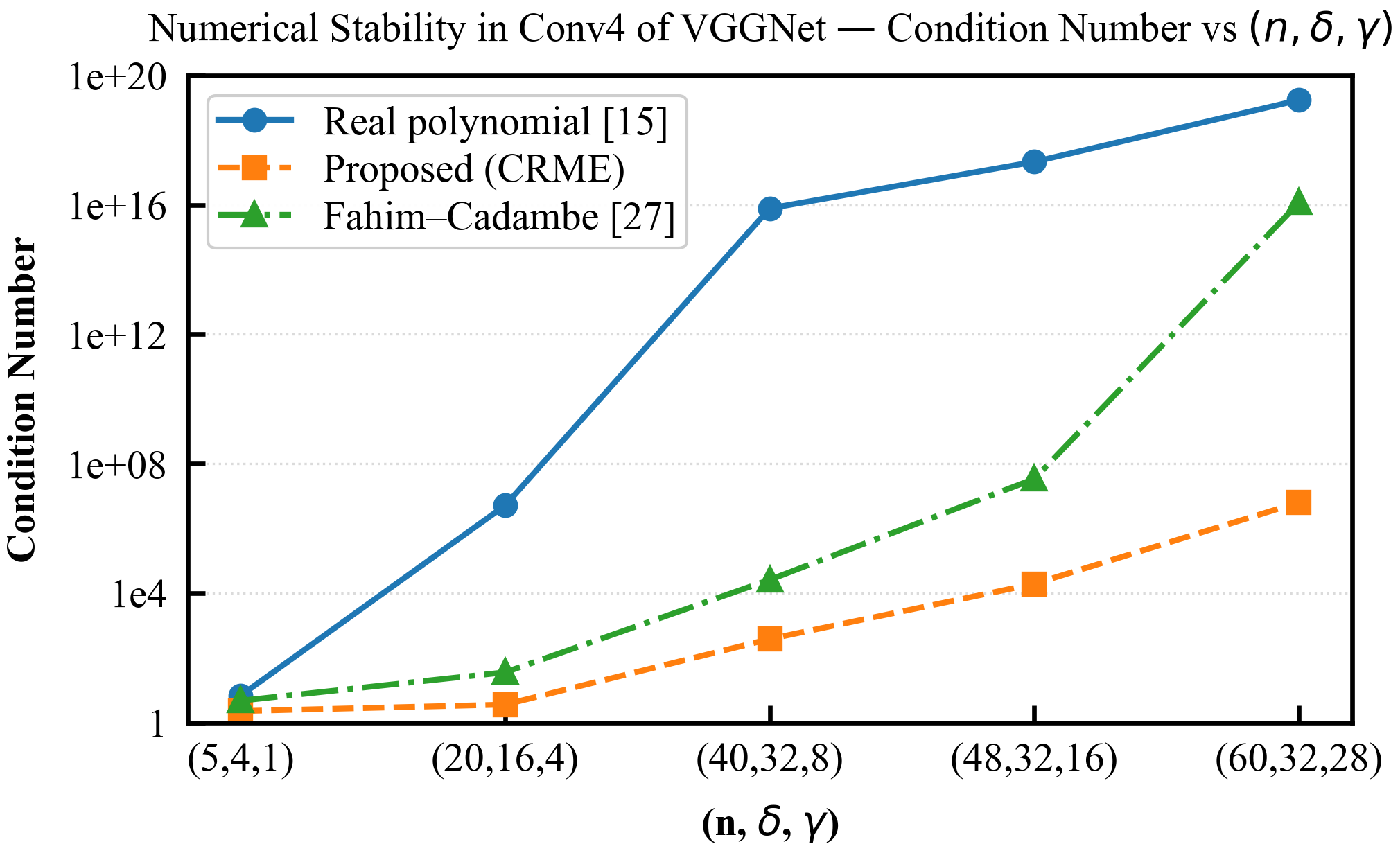}
  \vspace{-20pt}
  \caption{Comparison of Condition Number for Different Numerically Stable CDC Methods in Conv4 of VGGNet}
  \label{EXMSE2}
\end{figure}
\vspace{-8pt}

These results highlight that the FCDCC scheme markedly outperforms existing alternatives in terms of numerical stability for distributed tensor convolution, and therefore offers greater scalability before encountering instability issues.

\subsubsection{Experiment 3: Impact of \texorpdfstring{$n$}{n} and \texorpdfstring{$\delta$}{delta} on Average Computation Time}
This experiment evaluates the scalability of the FCDCC scheme by varying the number of worker nodes \(n\) and the recovery threshold \(\delta\), using the ConvLs of AlexNet for performance assessment. We set \(\gamma = 4\), with \(n\) ranging from 8 to 36 and \(\delta\) from 4 to 32.
\vspace{-6pt}
\begin{figure}[htbp]
  \centering
  \includegraphics[width=\columnwidth]{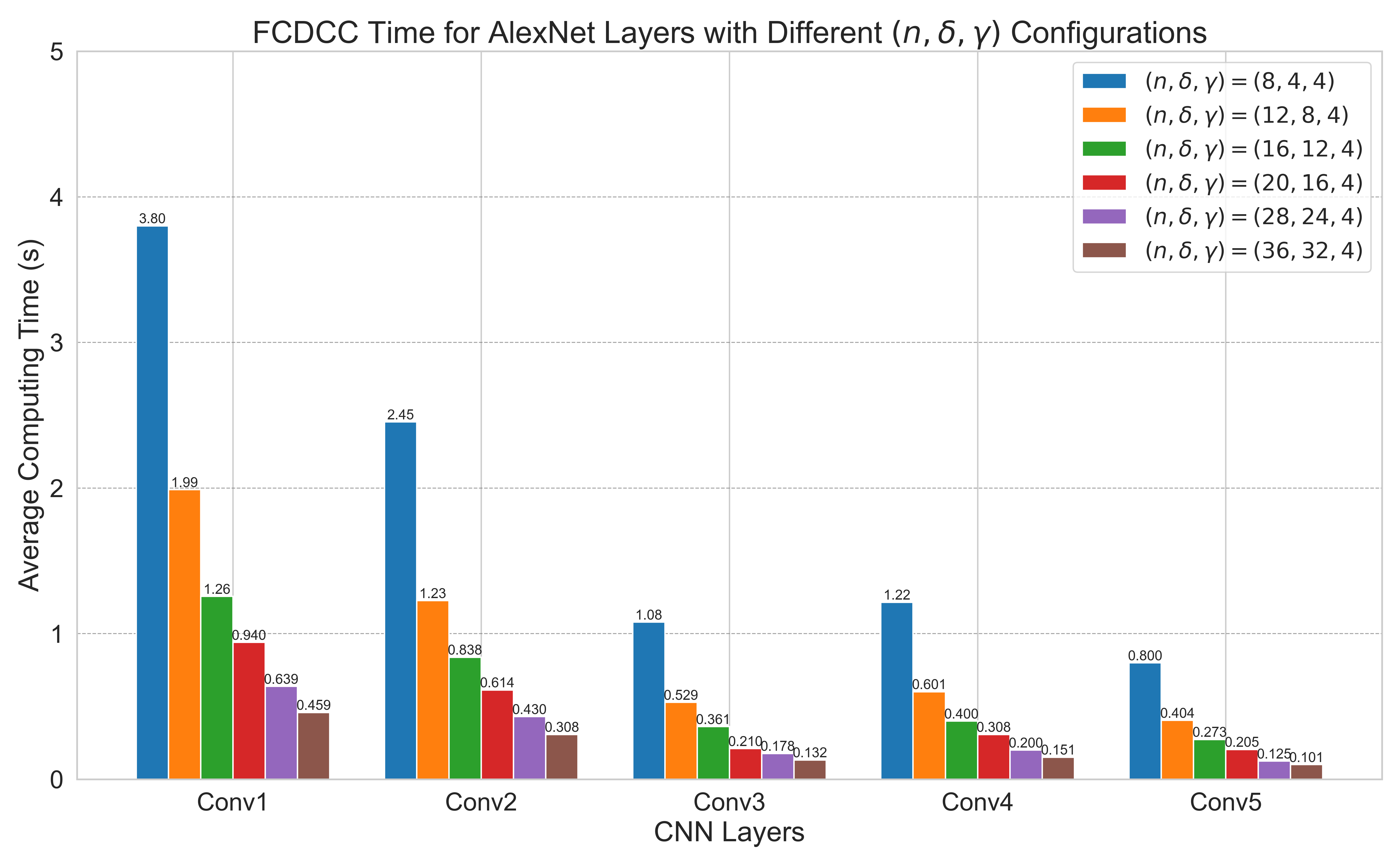}
  \vspace{-20pt} % ← 拉近图与caption
  \caption{Average Computation Time in FCDCC with Different \(n\) and \(\delta\)}
  \label{EX1}
\end{figure}
\vspace{-6pt}
As shown in Fig.~\ref{EX1}, the average computation time decreases as \(n\) and \(\delta\) increase, demonstrating the scalability of the FCDCC scheme. The results indicate that FCDCC efficiently utilizes additional computational resources, significantly reducing the average computation time for ConvLs.

\subsubsection{Experiment 4: Robustness Under Diverse Straggler Conditions}
This experiment assesses the robustness of the FCDCC framework under varying straggler conditions. Average computational latency was measured using thirty-three instances with \(n = 32\) worker nodes, \(\delta = 24\), and \(\gamma = 8\). We recorded average computation times for the ConvLs of AlexNet, varying the number of stragglers from 0 to 12, with artificial delays of 1 and 2 seconds.
\vspace{-12pt}
\begin{figure}[htbp]
  \centering
  \includegraphics[width=\columnwidth]{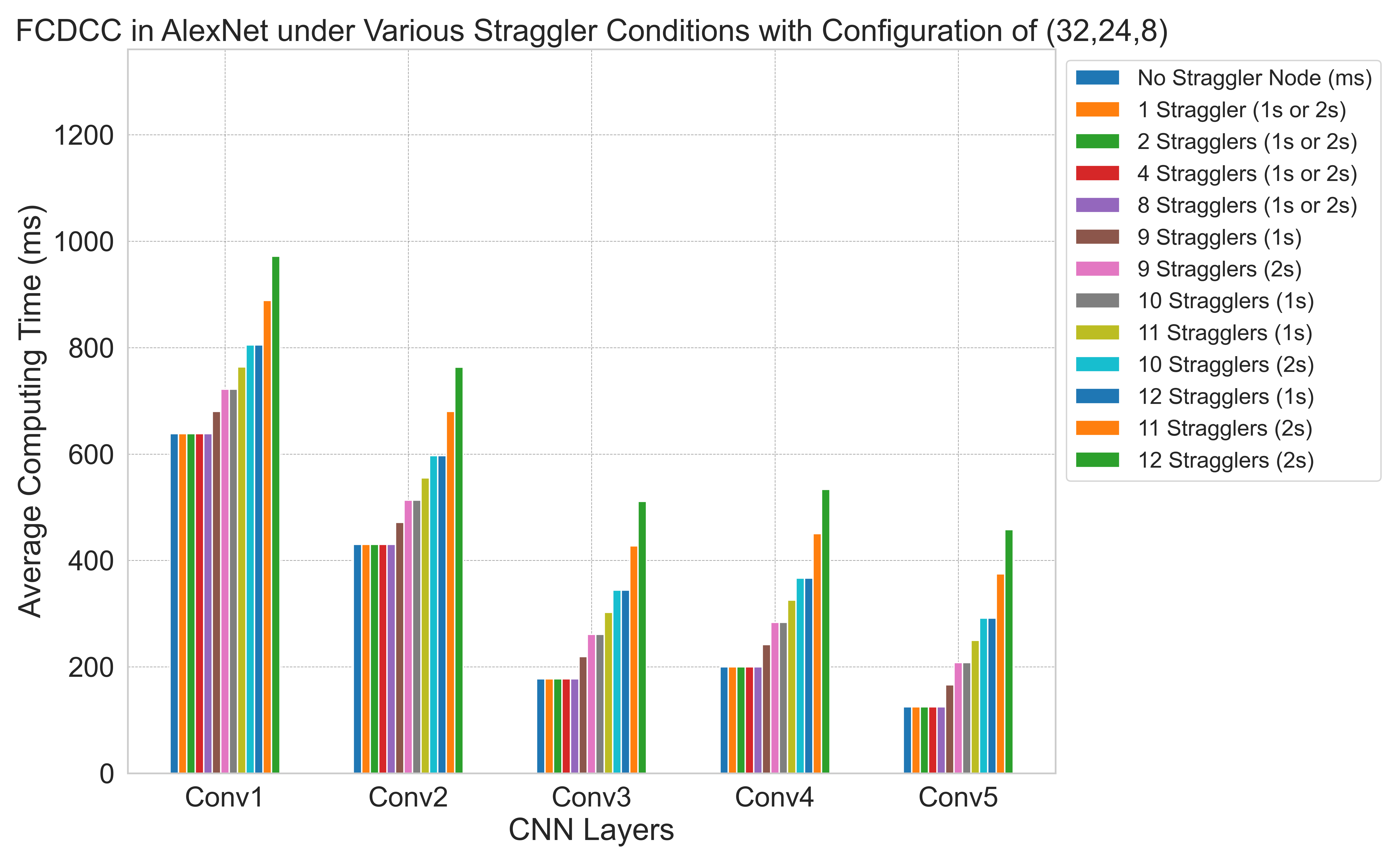}
  \vspace{-20pt} % ← 拉近图与caption
  \caption{Average Computation Time in FCDCC Under Varying Straggler Numbers and Delay Durations}
  \label{EX5}
\end{figure}
\vspace{-10pt}

Fig.~\ref{EX5} illustrates that the FCDCC scheme effectively mitigates the impact of stragglers as long as their number does not exceed the straggler tolerance capacity \(\gamma\), regardless of delay duration. Performance degradation becomes noticeable only when the number of stragglers surpasses \(\gamma\), highlighting the robustness of the FCDCC framework in distributed environments.
\subsubsection{Experiment 5: Optimization of Partition Configurations \texorpdfstring{\((k_A, k_B)\)}{(kA, kB)} for Various CNNs}
In this experiment, we optimize the partition configurations \((k_A, k_B)\) for ConvLs in LeNet, AlexNet, and VGGNet to minimize the total cost per node \(U_{k_A, k_B}\). Since the computation cost \(C_{\text{comp}}\) remains constant for a given \(Q\), we focus on minimizing the combined communication cost \(C_{\text{comm}}\) and storage cost \(C_{\text{store}}\), setting \(\lambda_{\text{comp}} = 0\).

To reflect realistic cost structures in large-scale distributed computing environments, we adopt cost coefficients \(\lambda_{\text{store}} = 0.023\) and \(\lambda_{\text{comm}} = 0.09\), based on AWS S3 pricing ratios for storage and communication per GB \cite{aws_s3_pricing}. We evaluate partition configurations for \(Q = 16, 32, 64\).

Fig.~\ref{fig:AlexC_1_2} illustrates the optimization landscape of \(U(k_A, k_B)\) for the first two ConvLs of AlexNet with \(Q = 32\). Discrete feasible points are highlighted in blue, and the optimal configurations \((k_A^*, k_B^*)\) are marked in red. The dashed line represents the constraint \(k_A k_B = Q\).
\begin{figure}[!t]
  \centering
  \subfloat[Layer 1]{\includegraphics[width=0.48\columnwidth]{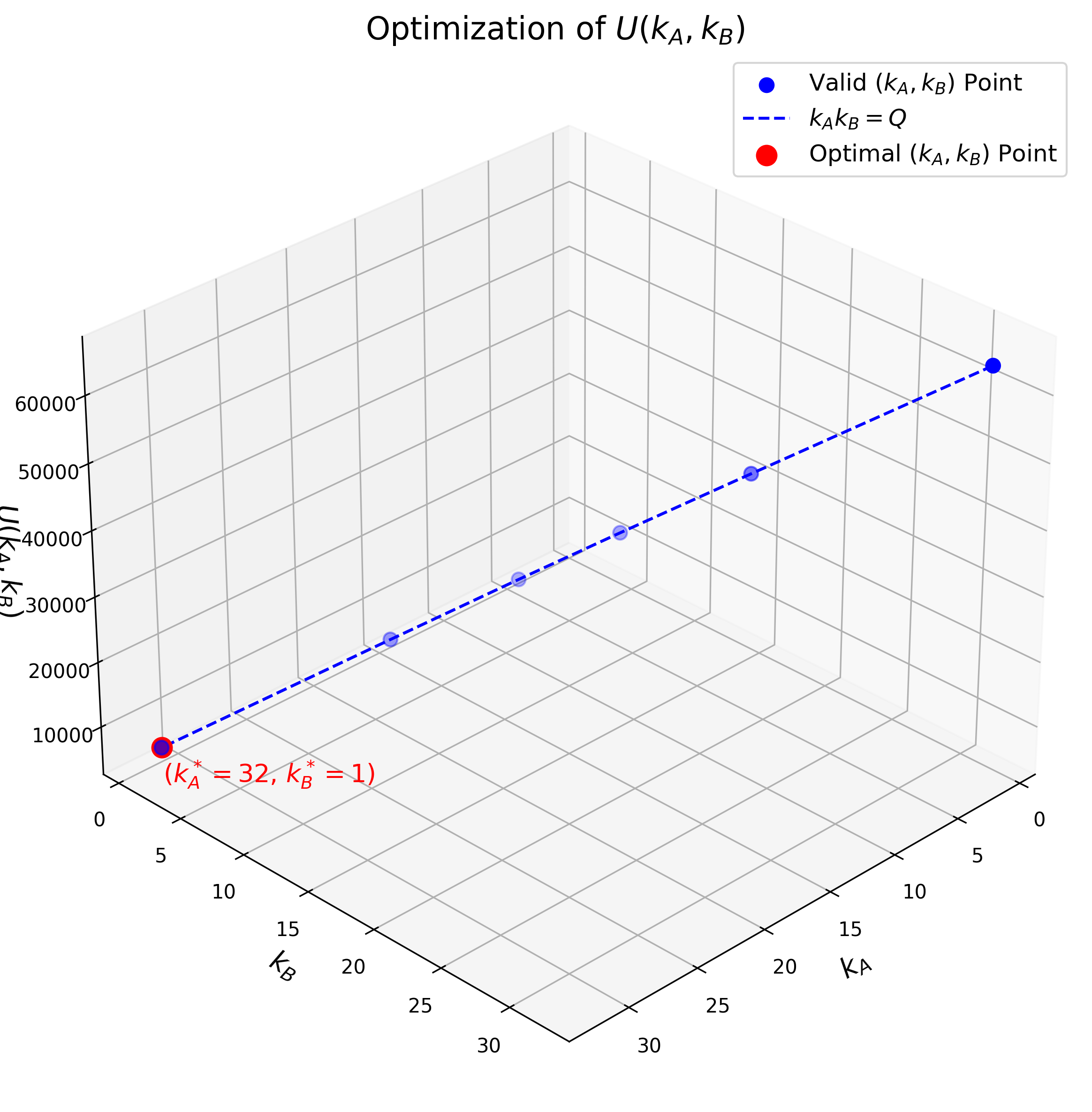}%
  \label{fig:AlexC_1}}
  \hfil
  \subfloat[Layer 2]{\includegraphics[width=0.48\columnwidth]{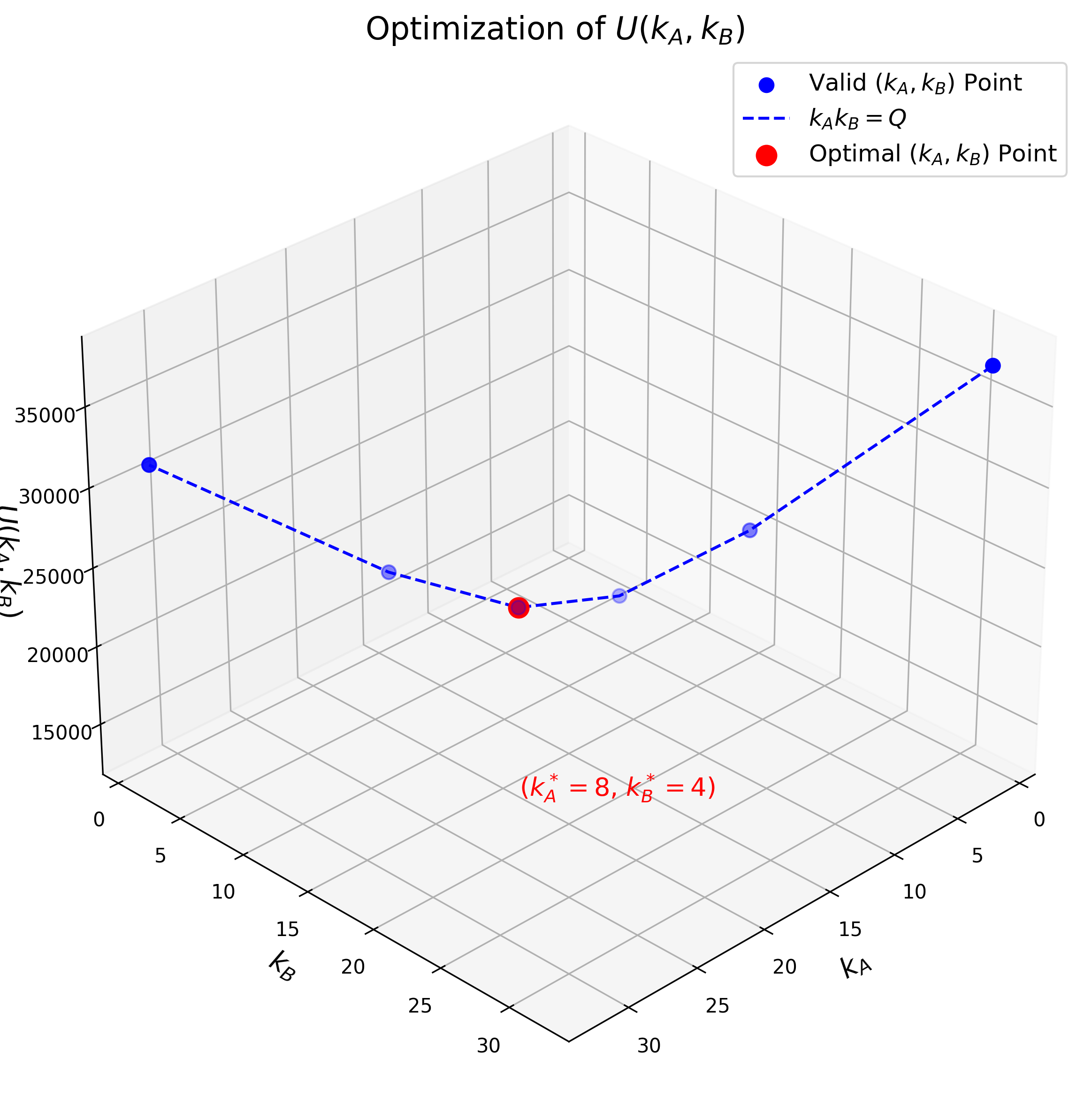}%
  \label{fig:AlexC_2}}
  \caption{Optimization of \(U(k_A, k_B)\) for the first two ConvLs of AlexNet with \(Q = 32\) and \(\lambda_{\text{comp}} = 0\).}
  \label{fig:AlexC_1_2}
\end{figure}
Table~\ref{table:cnn_optimization} summarizes the optimized partition configurations for the ConvLs in LeNet, AlexNet, and VGGNet. 

In early layers, the input tensors have larger spatial dimensions \((H, W)\) and fewer output channels \(N\), making communication cost \(C_{\text{comm}}\) the dominant factor. Thus, larger \(k_A\) and smaller \(k_B\) minimize \(C_{\text{comm}}\). In deeper layers, as \(N\) increases and \((H, W)\) decrease, storage cost \(C_{\text{store}}\) becomes more significant, favoring configurations with smaller \(k_A\) and larger \(k_B\). Additionally, the optimal values of \(k_A\) and \(k_B\) increase proportionally with \(Q\), reflecting the need for balanced partitioning as the total number of subtasks grows.

These results demonstrate that layer-specific partitioning effectively optimizes the trade-off between communication and storage costs in distributed CNN inference, enhancing overall cost efficiency.
\vspace{-6pt}
\begin{table}[!t]
\renewcommand{\arraystretch}{1.3}
\caption{Optimized \((k_A, k_B)\) Configurations for Various CNN Architectures}
\label{table:cnn_optimization}
\centering
\begin{tabular}{|l|c|c|c|c|c|c|}
\hline
\textbf{CNN Model} & \textbf{Q} & \textbf{Conv1} & \textbf{Conv2} & \textbf{Conv3} & \textbf{Conv4} & \textbf{Conv5} \\
\hline\hline
\multirow{3}{*}{LeNet-5} & 16 & (16, 1) & (8, 2) & -- & -- & -- \\
& 32 & (32, 1) & (16, 2) & -- & -- & -- \\
& 64 & (32, 2) & (16, 4) & -- & -- & -- \\
\hline
\multirow{3}{*}{AlexNet} & 16 & (16, 1) & (4, 4) & (2, 8) & (2, 8) & (2, 8) \\
& 32 & (32, 1) & (8, 4) & (2, 16) & (2, 16) & (4, 8) \\
& 64 & (32, 2) & (8, 8) & (4, 16) & (4, 16) & (4, 16) \\
\hline
\multirow{3}{*}{VGGNet} & 16 & (16, 1) & (16, 1) & (16, 1) & (4, 4) & (2, 8) \\
& 32 & (32, 1) & (32, 1) & (16, 2) & (8, 4) & (4, 8) \\
& 64 & (32, 2) & (32, 2) & (32, 2) & (8, 8) & (4, 16) \\
\hline
\end{tabular}
\end{table}
\vspace{-6pt}
\subsection{Summary of Experimental Results}
Experiments on LeNet-5, AlexNet, and VGGNet demonstrate that the proposed FCDCC framework significantly improves computational efficiency in distributed CNN inference, achieving computation time reductions exceeding 90\% compared to the naive scheme, while maintaining high numerical stability with negligible MSE ranging from \(10^{-30}\) to \(10^{-26}\). Also, FCDCC scales effectively, with computation time decreasing as the number of worker nodes \(n\) and the recovery threshold \(\delta\) increase, and tolerates up to \(\gamma\) stragglers without degradation. Optimizing partition configurations \((k_A, k_B)\) improves cost efficiency by balancing communication and storage costs. These results confirm the efficacy of FCDCC in distributed CNN inference.

\section{Conclusion and Future Work}  
In this paper, we proposed FCDCC framework, which integrates NSCTC with APCP and KCCP schemes. The FCDCC framework enhances numerical stability, system resilience, computational efficiency, and cost-effectiveness in distributed CNNs. Our theoretical analysis and extensive experimental results on networks such as LeNet-5, AlexNet, and VGGNet demonstrate that FCDCC significantly improves computational performance compared to traditional uncoded and existing coded schemes. Future work includes refining the coding mechanisms, extending the CDC scheme to support pooling layers and nonlinear activation functions, and enhancing privacy protection to safeguard against colluding and malicious worker nodes. 
\bibliographystyle{IEEEtran}  
\bibliography{reference}
\end{document}